\newtheorem{theorem}{Theorem}[section]
\newtheorem{fact}[theorem]{Fact}
\newtheorem{corollary}[theorem]{Corollary}
\newtheorem{lemma}[theorem]{Lemma}
\newtheorem{definition}[theorem]{Definition}
\newtheorem{observation}[theorem]{Observation}
\newcommand{\cH}{\mathcal{H}}
\newcommand{\tr}{\operatorname{Tr}}
\newcommand{\ket}[1]{\ensuremath{\left|#1\right\rangle}}
\newcommand{\op}[2]{|#1\rangle \langle #2|}
\def\eps{\epsilon}
\title{Learning marginals suffices!}
\begin{document}
\author{
Nengkun Yu
\thanks{Computer Science Department,
 Stony Brook University}
\and
Tzu-Chieh Wei
\thanks{CN Yang Institute for Theoretical Physics and Department of Physics and Astronomy,
 Stony Brook University}
}
\maketitle

\begin{abstract}
Beyond computer science, quantum complexity theory can potentially revolutionize multiple branches of physics, ranging from quantum many-body systems to quantum field theory. Our investigation centers on the intriguing relationship between the sample complexity of learning a quantum state and its circuit complexity. The circuit complexity of a quantum state represents the minimum quantum circuit depth needed to implement it. 
We derive a relationship between the circuit complexity of the quantum state and the unique determinism of the state by its marginals. As a direct consequence, the marginals uniquely determine the quantum state with low complexity. Additionally, our determination procedure proves robust, effectively handling potential noise in the marginals. Our result breaks the exponential barrier of the sample complexity for the tomography of quantum state with low complexity, even if only Pauli measurement is allowed. Our tool is quantum overlapping tomography, which only relies on random Pauli measurements.
Our proof overcomes difficulties characterizing short-range entanglement by bridging quantum circuit complexity and ground states of gapped local Hamiltonians. Our result, for example, settles the quantum circuit complexity of the multi-qubit GHZ state exactly.

\end{abstract}

\newpage
\section{Introduction}
\label{sec:intro}
The complexity of the physical systems provides an additional dimension to our understanding of the world~\cite{parisi1999complex,PhysRevLett.50.1946,PhysRevLett.43.1754}. The computational complexity theory offers a tool for studying the computational power of quantum computers and the resources required to solve computational problems. It also grants a new perspective on the nature of physical systems and how they function in quantum many-body physics. The circuit model offers a convenient way to quantify the complexity of a quantum state, which is the minimum depth of a quantum circuit that produces the state.

Most quantum states have complexities close to the maximum possible value through a counting argument~\cite{Nielsen:2011:QCQ:1972505}. However, despite being a long-standing open problem in quantum information theory~\cite{NLTS,https://doi.org/10.48550/arxiv.2206.13228}, it is highly challenging to establish a lower bound on quantum complexity for a given state because of the potential cancellation of gates~\cite{Haferkamp_2022} and quantum entanglement~\cite{RevModPhys.81.865}. 
Quantum entanglement is a phenomenon that describes the correlation between subsystems. Entanglement poses considerable challenges to the powerful tools of characterizing, simulating, and manipulating quantum systems~\cite{Shi_2006}, such as in quantum state tomography. This particular technique allows the reconstruction of the quantum state of a system from measurements and is one of the most indispensable tools for the development and verification of quantum technology. The sample complexity of quantum state tomography describes the protocol's efficiency, which refers to the number of measurements or samples required to reconstruct a quantum state using tomography techniques accurately~\cite{BBMR04,Keyl06,GJK08,FlammiaGrossLiuEtAl2012,KRT14,HHJ+16,https://doi.org/10.48550/arxiv.2206.05265,compressed,Flammia_2012,Gu__2020,doi:10.1137/1.9781611977554.ch47}. 
The sample complexity generically grows exponentially as the number of qubits increases~\cite{Holevo73,HHJ+16,OW16,OW17}. 

One can save many copies if we focus on learning or capturing only a part of that information instead of learning or representing the entire quantum state ~\cite{10.1145/3188745.3188802}. The recent active research of the {quantum overlapping tomography}~\cite{Cotler_2020} aims to output the classical representation of reduced density matrices rather than the entire quantum state. This surprising result~\cite{Cotler_2020} shows that all $k$-qubit reduced density matrices of an $n$ qubit state can be determined with at most $e^{O(k)} \log^2(n)$ rounds of Pauli measurements by leveraging the theory of perfect hash families. 
Huang, Kueng and Preskill designed classical shadow ~\cite{10.1145/3188745.3188802,Huang_2020,Huang_2022,evans2019scalable,10.1145/3406325.3451109} to predict $M$ functions of a quantum state using only a logarithmic number $O(\log(M))$ of measurements. This line of research opens the door to efficiently measuring many-body correlations and entanglement. 
In particular, \cite{Cotler_2020} leaves interesting questions of 
taking geometric
constraints into account and of 
exploring applications of quantum overlapping tomography in quantum state tomography. 

Despite the developments of the theoretical aspects of quantum computing, we are currently in the Noisy Intermediate-Scale Quantum (NISQ) era of quantum computing as regards  real quantum hardware. The available intermediate-scale quantum devices have limited coherence times, which makes it challenging to execute quantum circuits with considerable depth. Certifying the performance and accuracy of various near-term applications of quantum computers has become a critical challenge in quantum technologies. In the NISQ era, we are most likely facing quantum states as the output states of shallow quantum circuits. Intuitively, the description of a shallow quantum circuit only requires polynomial parameters. Therefore, it might be a manageable amount of information for a complete characterization.
Nevertheless, there is an exponential bottleneck in resource consumption in quantum state tomography. Will this obstacle prevent us from learning quantum systems in principle? The answer so far is indecisive, even for circuits with small depths. On the one hand, we can rigorously verify the intuition of efficient learning for a depth-$1$ quantum circuit because the output state is, at most, the tensor product of two-qubit states. On the other hand, the output state dramatically and quickly becomes unfathomable even for the quantum circuit of depth $4$~\cite{https://doi.org/10.48550/arxiv.quant-ph/0205133}.

This paper shows that \textit{exponential} number of samples is \textit{not necessary} for the tomography of quantum states with \textit{low circuit complexity}, even with only Pauli measurements. We prove that the output state of a depth-$D$ circuit, on general interaction graphs, is uniquely determined by its $2^D$-local reduced density matrices. Employing the geometrical locality can improve the bound to $2D$ on a 1-dimensional chain, and $\gamma_2(D)$~\footnote{We leave the definition of $\gamma_2(D)$ in Section 4. The upper bound listed is not tight.} on the square lattices with a combinatorial function $\gamma_2(D)\leq D^2+(D+1)^2$. We improved the original {quantum overlapping tomography} protocol to learn a set of reduced-density matrices. Our result is robust against perturbation in the following sense: any quantum state $\rho$, which has similar local reduced density matrices to those of a low-complexity quantum state $\ket{\psi}$, must be close to $\ket{\psi}$. In other words, we can treat the set of reduced density matrices as a robust fingerprint of quantum states with low complexity. Therefore, one can test whether a given quantum state is close to some low-complexity state or far from any low-complexity state with a small number of samples.
Our results rely on no assumption other than state complexity.

The intuition behind our findings originates from the observation that connects perhaps the two most important classes of quantum states: (1) the ground states of a local Hamiltonian and (2) the output states of quantum circuits. The output state of a shallow circuit is always the unique ground state of a local frustration-free Hamiltonian. Moreover, the spectral gap of this Hamiltonian has a constant lower bound.  We further observe that the set of unique ground states of local Hamiltonians are always uniquely determined by their local reduced density matrices among all mixed states (UDA).~\footnote{This is not true if we use "among all pure states" (UDP), instead of "among all mixed states."} This determination is robust against both statistical fluctuations from measurements and perturbation in the Hamiltonian as long as the gap is maintained throughout an entire short-range entangled phase of matter. 

As an application, our result also leads to a lower-bound method of the quantum state complexity. For example, the circuit complexity of 
$n$-qubit GHZ state is at least $\lceil \log n \rceil $ on general interaction graphs. This can be improved to $\lceil \frac{n}{2} \rceil $ on $1$-D chain, and $\max\limits_{D: \gamma_2(D)\leq n}\lceil D \rceil $ on square lattice. Interestingly, these are the exact complexity of generating $n$-qubit GHZ state. Our result also provides a lower-bound method for the complexity of the unitary through the correspondence between Choi states and unitaries, where the complexity of a unitary is the smallest number of the circuit depth among all the circuits implementing the unitary.

Our findings justify quantum state tomography from the viewpoint of reduced density matrices ~\cite{PhysRevLett.89.277906,PhysRevLett.89.207901,Xin_2017,Cramer_2010}, showing that it is precise and reliable against statistical fluctuations. It also confirms the intuition of \cite{Cotler_2020} that quantum overlapping tomography is a useful tool for the experimental characterization of many-body quantum states. 
More concretely, we refine the 
original protocol of quantum overlapping tomography by considering the geometric constraints of qubits. To develop this tool, we first study the state tomography using Pauli measurements \cite{Flammia_2012,Gu__2020} and improve the number of samples to ${\mathcal{O}}(\frac{10^n}{\epsilon^2})$. Then, we use this protocol to show ${\mathcal{O}}(\frac{10^k \log m}{\epsilon^2})$ samples are enough to output the classical representation of $m$ number of $k$-qubit reduced density matrices. We further show that this is tight for constant $k$.

\section{Preliminaries}

\subsection{Basic quantum mechanics}
An isolated physical system is associated with a
Hilbert space, called the {\it state space}. A {\it pure state} of a
quantum system is a normalized vector in its state space, denoted by the Dirac notation $\ket{\varphi}$. A
{\it mixed state} is represented by a density operator on the state
space. Here, a density operator $\rho$ on $d$-dimensional Hilbert space $\cH$ is a
semi-definite positive linear operator such that $\tr(\rho)=1$.

The state space of a composed quantum system is the tensor product of the state spaces of its component systems.
A Hilbert space $\bigotimes_{k=1}^{n}\cH_{k}$ is the tensor product of Hilbert spaces $\cH_{k}$.
The quantum state on the multipartite system $\bigotimes_{k=1}^{n}\cH_{k}$ is a semi-definite positive linear operator such that $\tr(\rho)=1$.

\subsection{Quantum measurements}

A positive-operator valued measure (POVM) on a finite-dimensional Hilbert space $\mathcal{H}$ is a set of positive semi-definite matrices $\mathcal{M}=\{M_i\}$ such that
\begin{align*}
\sum M_i=I_{\mathcal{H}}.
\end{align*}
Equivalently, a POVM $\mathcal{M}=\{M_i\}$ on $d$-dimensional $\mathcal{H}$ is informationally complete if the linear span of $\{M_i\}$ equals to the whole $d\times d$ matrix space. In a qubit system, it means $\sigma_I,\sigma_X,\sigma_Y$ and $\sigma_Z$ all live in the linear span of $\{M_i\}$, where $\sigma_I$, $\sigma_X,\sigma_Y$ and $\sigma_Z$ are the identity and the three Pauli matrices, respectively,
\begin{align*}
\sigma_I=\begin{bmatrix}1 &0\\0&1\end{bmatrix}, \sigma_X=\begin{bmatrix}0 &1\\1&0\end{bmatrix},  \sigma_Z=\begin{bmatrix}1 &0\\0&-1\end{bmatrix}, \sigma_Y=\begin{bmatrix}0 &i\\-i&0\end{bmatrix}.
\end{align*}
For example, the measurement described by $\mathcal{M}=\frac{1}{6}\{\sigma_I+\sigma_X,\sigma_I-\sigma_X,\sigma_I+\sigma_Y,\sigma_I-\sigma_Y,\sigma_I+\sigma_Z,\sigma_I-\sigma_Z\}$ is directly observed as an informationally complete measurement.
\begin{observation}
Given sufficient measurement outcomes of an informationally complete POVM, one can determine the state with high accuracy and confidence.
\end{observation}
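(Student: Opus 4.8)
The plan is to reduce the statement to a standard stability estimate for an injective linear map, combined with a concentration bound for empirical frequencies. Write $\mathcal{M}=\{M_i\}_{i=1}^m$ and let $\Phi\colon \rho\mapsto(\tr(M_i\rho))_{i=1}^m$ be the linear map sending a state to its outcome-probability vector. The \emph{informational completeness} hypothesis says that $\mathrm{span}\{M_i\}$ is the whole $d\times d$ matrix space, which, by the trace duality $\langle A,B\rangle=\tr(A^\dagger B)$, is equivalent to $\Phi$ being injective on the (real) space of Hermitian matrices. Since this space is finite dimensional, injectivity implies that $\Phi$ is bounded below: there is a constant $\kappa=\kappa(\mathcal{M})<\infty$, namely the reciprocal of the smallest singular value of $\Phi$ restricted to Hermitian matrices, such that $\|\sigma-\rho\|_1\le \kappa\,\|\Phi(\sigma)-\Phi(\rho)\|_2$ for all Hermitian $\sigma,\rho$. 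In particular $\Phi$ admits an explicit left inverse $\Phi^{+}$ (its Moore--Penrose pseudoinverse) with operator norm at most $\kappa$.

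First I would record the exact (noiseless) version: from the true probabilities $p=\Phi(\rho)$ one recovers $\rho=\Phi^{+}(p)$ by solving a linear system, so the state is uniquely determined. Next I would add sampling noise. Performing the POVM $N$ times on independent copies and letting $\hat p_i$ be the empirical frequency of outcome $i$, each $\hat p_i$ is an average of $N$ i.i.d.\ Bernoulli$(p_i)$ variables; Hoeffding's inequality together with a union bound over the $m$ outcomes gives $\Pr[\,\|\hat p-p\|_\infty>\eta\,]\le 2m\exp(-2N\eta^2)$. Hence $N=O\!\big(\eta^{-2}\log(m/\delta)\big)$ copies suffice to guarantee $\|\hat p-p\|_2\le \sqrt{m}\,\eta$ with probability at least $1-\delta$.

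I would then set $\hat\rho:=\Pi\big(\Phi^{+}(\hat p)\big)$, where $\Pi$ is the Euclidean projection onto the set of density matrices (a non-expansive operation that only decreases the distance to the genuine state $\rho$). Combining the stability estimate with the concentration bound yields $\|\hat\rho-\rho\|_1\le 2\kappa\sqrt{m}\,\eta$ with probability $\ge 1-\delta$; choosing $\eta=\eps/(2\kappa\sqrt{m})$ gives accuracy $\eps$ from $N=O\!\big(\kappa^2 m\,\eps^{-2}\log(m/\delta)\big)$ samples, which is finite for every fixed informationally complete $\mathcal{M}$.

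The only point requiring care --- and the closest thing to an obstacle --- is that the conclusion as stated is purely qualitative, so the argument is really about exhibiting a finite stability constant $\kappa$ rather than proving anything deep: once $\Phi$ is known to be injective on a finite-dimensional space, the existence of $\kappa$ is automatic, and the remainder is bookkeeping with Hoeffding and the triangle inequality. Quantitatively, the dependence of $\kappa$ and of $m$ on the dimension $d$ is exactly what the later sections sharpen for the specific Pauli-based POVMs, so here I would not attempt to optimise it.
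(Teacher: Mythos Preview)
The paper does not prove this observation; it is stated as a well-known preliminary fact and left without justification. Your proposal supplies exactly the standard argument one would give: informational completeness is equivalent to injectivity of the linear map $\Phi$ on Hermitian matrices, finite-dimensionality then yields a stability constant $\kappa$, and Hoeffding plus a union bound controls the empirical deviation. This is correct and complete for the qualitative claim.

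One small technical point: you assert that the Euclidean projection $\Pi$ onto density matrices ``only decreases the distance to the genuine state $\rho$,'' but your preceding bound is in trace norm, whereas the Euclidean (Frobenius) projection is non-expansive only in Frobenius norm; metric projections onto convex sets in a general Banach space need not be contractive. The fix is routine---either work in Frobenius norm throughout and convert to trace norm at the end (absorbing a $\sqrt{d}$ factor into $\kappa$), or simply omit the projection and output $\Phi^{+}(\hat p)$ as a Hermitian estimator that is already $\epsilon$-close to $\rho$. Since the observation is purely qualitative, neither change affects the conclusion, and your factor of $2$ can be replaced by a dimension-dependent constant without harm.
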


\begin{observation}
For informationally complete POVMs, $\mathcal{M}_1$ on $\mathcal{H}_1$ and $\mathcal{M}_2$ on $\mathcal{H}_2$, $\mathcal{M}_1\otimes\mathcal{M}_2$ is an informationally complete POVM on $\mathcal{H}_1\otimes\mathcal{H}_2$.
\end{observation}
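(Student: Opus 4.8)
The plan is to verify the two defining properties of a POVM for the set $\mathcal{M}_1\otimes\mathcal{M}_2:=\{M_i\otimes N_j\}_{i,j}$, where $\mathcal{M}_1=\{M_i\}$ and $\mathcal{M}_2=\{N_j\}$, and then check informational completeness via a span argument.

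First I would confirm that $\mathcal{M}_1\otimes\mathcal{M}_2$ is a POVM at all. Each $M_i\otimes N_j$ is positive semi-definite because the tensor product of two positive semi-definite operators is positive semi-definite (its eigenvalues are products of nonnegative eigenvalues of the factors). For the completeness relation, using bilinearity of the tensor product together with $\sum_i M_i = I_{\mathcal{H}_1}$ and $\sum_j N_j = I_{\mathcal{H}_2}$,
\[
\sum_{i,j} M_i\otimes N_j \;=\; \Bigl(\sum_i M_i\Bigr)\otimes\Bigl(\sum_j N_j\Bigr) \;=\; I_{\mathcal{H}_1}\otimes I_{\mathcal{H}_2} \;=\; I_{\mathcal{H}_1\otimes\mathcal{H}_2}.
\]

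Next, for informational completeness I would show that $\operatorname{span}\{M_i\otimes N_j\}$ equals the whole space of operators on $\mathcal{H}_1\otimes\mathcal{H}_2$. Fix arbitrary operators $A$ on $\mathcal{H}_1$ and $B$ on $\mathcal{H}_2$. Since $\mathcal{M}_1$ is informationally complete we can write $A=\sum_i a_i M_i$ for scalars $a_i$, and likewise $B=\sum_j b_j N_j$; hence $A\otimes B=\sum_{i,j} a_i b_j\,(M_i\otimes N_j)$ lies in $\operatorname{span}\{M_i\otimes N_j\}$. Because operators of the product form $A\otimes B$ linearly span the full operator space on $\mathcal{H}_1\otimes\mathcal{H}_2$ (equivalently, by a dimension count: $\dim = d_1^2 d_2^2$ where $d_k=\dim\mathcal{H}_k$), the span of $\{M_i\otimes N_j\}$ is all of that operator space, which is exactly the definition of informational completeness.

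I do not anticipate a genuine obstacle here: the only points needing a line of justification are the mixed-product (bilinearity) identity used in the completeness relation and the standard fact that product operators span the joint operator space. Combining the POVM check with the span argument yields the claim. (If one prefers an even more concrete route, one can instead note that in the qubit-wise setting it suffices to exhibit $\sigma_I,\sigma_X,\sigma_Y,\sigma_Z$ on each factor inside the respective spans and then take tensor products, but the abstract argument above covers arbitrary finite dimension.)
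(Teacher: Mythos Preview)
Your proposal is correct. The paper states this observation without proof, treating it as standard background; your argument (verify positivity and completeness, then show the span of $\{M_i\otimes N_j\}$ is the full operator space because product operators span it) is exactly the standard justification and would be the natural proof to supply.
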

Specifically, $\mathcal{M}^{\otimes n}=\mathcal{M}=\frac{1}{6^n}\{\sigma_I+\sigma_X,\sigma_I-\sigma_X,\sigma_I+\sigma_Y,\sigma_I-\sigma_Y,\sigma_I+\sigma_Z,\sigma_I-\sigma_Z\}^{\otimes n}$ is an informationally complete POVM of an $n$-qubit system.
\begin{definition}
Let $X_1,X_2,\cdots,X_n$ be $n$ samples of a distribution on $\{1,2,\cdots,n\}$. Then the empirical distribution is defined as
\begin{align*}
\hat{p}(i)=\frac{\mathrm{number~of~occurrences~of}~i}{n}.
\end{align*}

\end{definition}

The following McDiarmid’s inequality \cite{mcdiarmid_1989} will be used in this paper.
\begin{lemma}\label{mc}
Consider independent random variables ${\displaystyle X_{1},X_{2},\dots X_{n}}$ on probability space $ {\displaystyle (\Omega ,{\mathcal {F}},{\text{P}})}$, where ${\displaystyle X_{i}\in {\mathcal {X}}_{i}}$ for all ${\displaystyle i}$, and there is a mapping ${\displaystyle f:{\mathcal {X}}_{1}\times {\mathcal {X}}_{2}\times \cdots \times {\mathcal {X}}_{n}\rightarrow \mathbb {R} }$. Assume there exist constants $ {\displaystyle c_{1},c_{2},\dots ,c_{n}} $ such that for all $ {\displaystyle i}$,
\begin{align}{\displaystyle {\underset {x_{1},\cdots ,x_{i-1},x_{i},x_{i}',x_{i+1},\cdots ,x_{n}}{\sup }}|f(x_{1},\dots ,x_{i-1},x_{i},x_{i+1},\cdots ,x_{n})-f(x_{1},\dots ,x_{i-1},x_{i}',x_{i+1},\cdots ,x_{n})|\leq c_{i}.} 
\end{align}
(In other words, changing the value of the ${\displaystyle i}$-th coordinate ${\displaystyle x_{i}}$ changes the value of ${\displaystyle f}$ by at most ${\displaystyle c_{i}}$.) Then, for any ${\displaystyle \epsilon >0}$,
\begin{align} 
{\displaystyle {\mathrm{Pr}}(f(X_{1},X_{2},\cdots ,X_{n})-\mathbb {E} [f(X_{1},X_{2},\cdots ,X_{n})]\geq \epsilon )\leq \exp \left(-{\frac {2\epsilon ^{2}}{\sum _{i=1}^{n}c_{i}^{2}}}\right)} .
\end{align}
\end{lemma}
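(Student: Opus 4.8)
The plan is to prove this via the martingale (Doob) method and the Azuma--Hoeffding argument. Note first that the bounded-differences hypothesis forces $f$ to be bounded (moving between any two points one coordinate at a time changes $f$ by at most $\sum_i c_i$), so all expectations below are well defined. I would introduce the Doob martingale $V_0 = \mathbb{E}[f(X_1,\dots,X_n)]$ and $V_i = \mathbb{E}[f(X_1,\dots,X_n)\mid X_1,\dots,X_i]$ for $i=1,\dots,n$, so that $V_n = f(X_1,\dots,X_n)$, the increments $D_i := V_i - V_{i-1}$ satisfy $\mathbb{E}[D_i\mid X_1,\dots,X_{i-1}]=0$, and $f - \mathbb{E} f = \sum_{i=1}^n D_i$.

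The key step is to show that, conditioned on $X_1,\dots,X_{i-1}$, the increment $D_i$ lies in an interval of width at most $c_i$. Using independence of the $X_j$, I would write $V_i = g_i(X_1,\dots,X_i)$ where $g_i(x_1,\dots,x_i) := \mathbb{E}[f(x_1,\dots,x_i,X_{i+1},\dots,X_n)]$, and correspondingly $V_{i-1} = \mathbb{E}[\,g_i(X_1,\dots,X_{i-1},X_i)\mid X_1,\dots,X_{i-1}\,]$. Fixing $x_1,\dots,x_{i-1}$ and setting $\ell_i = \inf_x g_i(x_1,\dots,x_{i-1},x)$ and $u_i = \sup_x g_i(x_1,\dots,x_{i-1},x)$, the bounded-differences assumption gives $u_i - \ell_i \le c_i$ (changing the $i$-th coordinate changes $f$, hence its average over the remaining independent coordinates, by at most $c_i$), while $V_{i-1} \in [\ell_i,u_i]$ since it is an average of $g_i(x_1,\dots,x_{i-1},X_i)$. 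Hence $D_i = V_i - V_{i-1}$ is supported, conditionally on $X_1,\dots,X_{i-1}$, on an interval of length at most $c_i$.

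Next I would invoke Hoeffding's lemma: a mean-zero random variable supported on an interval of length $w$ has moment generating function bounded by $e^{s^2 w^2/8}$. Applying it conditionally to $D_i$ yields $\mathbb{E}[e^{sD_i}\mid X_1,\dots,X_{i-1}] \le e^{s^2 c_i^2/8}$ for every $s>0$. Peeling the increments off one at a time with the tower rule then gives
\begin{align*}
\mathbb{E}\!\left[e^{s(f - \mathbb{E} f)}\right]
= \mathbb{E}\!\left[e^{s\sum_{i=1}^{n} D_i}\right]
\le \prod_{i=1}^{n} e^{s^2 c_i^2/8}
= \exp\!\left(\frac{s^2}{8}\sum_{i=1}^{n} c_i^2\right).
\end{align*}
Finally, a Chernoff bound gives $\Pr(f - \mathbb{E} f \ge \epsilon) \le \exp\!\left(-s\epsilon + \tfrac{s^2}{8}\sum_{i=1}^n c_i^2\right)$ for all $s>0$, and the choice $s = 4\epsilon/\sum_{i=1}^n c_i^2$ produces the claimed bound $\exp\!\left(-2\epsilon^2/\sum_{i=1}^n c_i^2\right)$.

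I expect the main obstacle to be the careful justification of the conditional range bound on $D_i$ in the second step: this is the only place the bounded-differences hypothesis enters nontrivially, and it crucially relies on independence of the $X_i$, which is what allows the conditional expectation over $X_{i+1},\dots,X_n$ to factor out so that the coordinatewise hypothesis becomes applicable. The rest (Hoeffding's lemma, the tower-rule peeling, and the optimization over $s$) is routine.
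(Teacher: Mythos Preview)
Your proof is correct and is precisely the standard textbook argument for McDiarmid's inequality. However, there is nothing to compare it to: the paper does not prove this lemma. It is stated in the preliminaries section as a known result, with a citation to McDiarmid's original paper \cite{mcdiarmid_1989}, and is then used as a black box later in Section~3. So your proposal is not so much an alternative route as the canonical proof of a result the authors simply import.
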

\subsection{Local Hamiltonian}

A Hamiltonian is an operator representing a quantum system's total energy in quantum mechanics. A local Hamiltonian refers to a Hamiltonian operator expressed as a sum of terms, where each term involves only a few qubits in the system. More precisely, a $k$-local Hamiltonian of $n$ qubit system is of form
\begin{align*}
H=\sum_s H_s\otimes I_{\bar{s}},
\end{align*}
where $s$ may range over all subset of $\{0,\cdots,n-1\}$ with $|s|\leq k$, and $H_s$ with the property that $0\leq H_s \leq I_s$ is an operator acting on the sub-system $s$ (whose complement is denoted as $\bar{s}$). 
We call the set of $s$'s as the interaction graph.
For simplicity, we will also use 
\begin{align*}
H=\sum_s H_s,
\end{align*}
without explicitly writing out $I_{\bar{s}}$ in each term.

The ground energy of $H$ is the smallest eigenvalue of $H$. Ground space is the linear subspace spanned by the eigenvectors corresponding to the smallest eigenvalue. We call the normalized eigenvector a unique ground state if the ground space has only one dimension. The spectrum gap of Hamiltonian $H$ is the difference between the smallest eigenvalue and the second smallest eigenvalue.

A local Hamiltonian is called frustration-free if its ground energy is 0 (note that each term in the Hamiltonian here is positive semi-definite). The ground space of a local Hamiltonian $H=\sum_s H_s$ is the intersection of the null space of each local term $H_s$.

\subsection{Quantum Circuits}

The most popular and well-established model of quantum computing is the quantum circuit model, where quantum algorithms consist of a sequence of quantum gates applied to a set of qubits. In the \textit{quantum circuit model}, a quantum program consists of an instruction sequence $U_1 \cdots U_k$ and operates on an $n$-qubit register. For $1\leq i \leq k$, $U_i=\otimes_j V_{i,j}$ is the tensor product of two-qubit unitaries $V_{i,j}$ such that $V_{i,j_1}$ and $V_{i,j_2}$ apply on different qubits when $j_1\neq j_2$.

We take the initial state as $\ket{0}^{\otimes n}$. The meaning of the circuit is the matrix product $U_{k} \cdots U_{1}$ applied on the initial state  $\ket{0}^{\otimes n}$. We can regard $U_{i}$ as
a unitary matrix that applies to the entire $n$-qubit register. It is difficult to determine a quantum program's behavior using classical methods due to the state space's exponential size. This fact is known as the ``exponential wall" in quantum computing and is one of the main challenges in quantum program verification.

A quantum circuit depth refers to the number of sequential layers of quantum gates required to implement a quantum algorithm or operation on a quantum computer. Each layer consists of a set of quantum gates acting in different qubits. The depth of a quantum circuit is an essential metric for measuring the efficiency of a quantum algorithm, as it can impact the time required to execute the entire algorithm on a quantum computer.

\subsection{Uniquely determined (UD) by reduced density matrices}
Quantum state tomography via reduced
density matrices is an especially
promising resource-saving approach. It requires
the global state
has to be the only state compatible with its reduced density matrices. In other words,
it must be uniquely determined (UD) by its reduced density matrices \cite{PhysRevLett.89.277906,PhysRevLett.89.207901,Xin_2017}.

The UD criterion can be further classified into two categories: uniquely determined among all states (UDA) and
uniquely determined among pure states by local
reduced density matrices (UDP)~\cite{PhysRevA.88.012109}.

A quantum state $\psi=\op{\psi}{\psi}$ is UDP by its reduced density matrices $\psi_{s_i}$ (supported on a set $s_i$ of sites, each with a finite Hilbert space) for $1\leq i\leq m$ if
\begin{align*}
\phi=\op{\phi}{\phi},\ \&  \ \phi_{s_i}=\psi_{s_i}, \forall 1\leq i\leq m\  \Longrightarrow \  \phi=\psi,
\end{align*}
where we use the same symbol $\phi$ but with subscript $s_i$ to denote the reduced density matrix obtained by tracing out the complement of $s_i$ from the state $\phi$; and similarly for $\psi_{s_i}$.

A quantum state $\psi=\op{\psi}{\psi}$ is UDA by its reduced density matrices $\psi_{s_i}$ for $1\leq i\leq m$ if for some generally mixed state $\rho$,
\begin{align*}
 \rho_{s_i}=\psi_{s_i}, \forall 1\leq i\leq m\  \Longrightarrow \  \rho=\psi.
\end{align*}

In other words, UDP means no other pure state shares the same set of reduced density matrices;  UDA means no other state shares the same set of reduced density matrices.
It is important to note that UDP does not imply UDA~\cite{Xin_2017}.

\section{Quantum Overlapping Tomography from Random Pauli Measurements}

In this section, We start with an observation about Pauli measurements, common knowledge for quantum information experimentalists. Then, we employ this observation to improve previous results about quantum state tomography using Pauli measurements.

\subsection{ An Observation about Pauli Measurements}
When we measure an element of the Pauli group, for instance, $\sigma_X\otimes \sigma_Y$, on a two-qubit state $\rho$, the outcome is a sample from a $4$-dimensional probability distribution, says $(p_{00},p_{01},p_{10},p_{11})$, such that
\begin{align*}
\tr(\rho(\sigma_X\otimes \sigma_Y))=p_{00}-p_{01}-p_{10}+p_{11}.
\end{align*}

One can easily observe that
\begin{align*}
\tr(\rho(\sigma_X\otimes \sigma_I))=p_{00}+p_{01}-p_{10}-p_{11},\\
\tr(\rho(\sigma_I\otimes \sigma_Y))=p_{00}-p_{01}+p_{10}-p_{11},\\
\tr(\rho(\sigma_I\otimes \sigma_I))=p_{00}+p_{01}+p_{10}+p_{11}.
\end{align*}

In other words, by measuring $XY$, we obtained a sample of $\sigma_X\sigma_I$, a sample of $\sigma_I\sigma_Y$, and a sample of $\sigma_I\sigma_I$. 

For a general $n$-qubit system, we have the following observation.
\begin{observation}
For any $P=P_1\otimes P_2\otimes\cdots\otimes P_n\in\{\sigma_X,\sigma_Y,\sigma_Z\}^{\otimes n}$, the measurement result of performing measurement $P_i$ on the $i$-th qubit is an $n$-bit string $s$. One can interpret the measurement outcome of performing $Q_i\in\{\sigma_I,\sigma_X,\sigma_Y,\sigma_Z\}$ on the $i$-th qubit if $Q_i=P_i$ or $Q_i=\sigma_I$. We call those $Q=Q_1\otimes Q_2\otimes\cdots\otimes Q_n$'s correspond to $P$.
\end{observation}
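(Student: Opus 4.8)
\medskip
\noindent\textbf{Proof plan.} First I would pin down what ``interpret the outcome'' should mean. Performing $P_i\in\{\sigma_X,\sigma_Y,\sigma_Z\}$ on qubit $i$ is the two-outcome projective measurement $\{\Pi_i^{0},\Pi_i^{1}\}$ with $\Pi_i^{b}=\frac12\bigl(\sigma_I+(-1)^{b}P_i\bigr)$, the bit $b$ recording the eigenvalue $(-1)^{b}$ of $P_i$; doing this on all qubits at once is the projective measurement with rank-one elements $E_s=\bigotimes_{i=1}^{n}\Pi_i^{s_i}$ for $s\in\{0,1\}^{n}$, so on $\rho$ the string $\rv s$ has law $\Pr_\rho[\rv s=s]=\tr(\rho E_s)$. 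Next, fix a $Q=Q_1\otimes\cdots\otimes Q_n$ corresponding to $P$, i.e.\ $Q_i\in\{\sigma_I,P_i\}$, and let $A=\{i:Q_i=P_i\}$; since $Q$ has eigenvalues $\pm1$ its projective measurement is $\{\tfrac12(\sigma_I+Q),\tfrac12(\sigma_I-Q)\}$. I would then state the claim precisely as: the deterministic post-processing $g_Q(s):=(-1)^{\sum_{i\in A}s_i}$ (equal to $+1$ when $A=\emptyset$, i.e.\ $Q=\sigma_I^{\otimes n}$) turns the $P$-outcome into a genuine sample of the $Q$-measurement, that is $\Pr_\rho[g_Q(\rv s)=\pm1]=\tr\bigl(\rho\cdot\tfrac12(\sigma_I\pm Q)\bigr)$, and in particular $\E_\rho[g_Q(\rv s)]=\tr(\rho Q)$ --- exactly the pattern of the displayed two-qubit identities.

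The heart of the argument is the operator identity $\sum_{s\in\{0,1\}^{n}}g_Q(s)\,E_s=Q$, which I would prove by factorizing over qubits: the left side equals $\bigotimes_{i=1}^{n}\bigl(\sum_{b\in\{0,1\}}(-1)^{[i\in A]\,b}\Pi_i^{b}\bigr)$, whose $i$-th factor is $\Pi_i^{0}-\Pi_i^{1}=P_i$ when $i\in A$ and $\Pi_i^{0}+\Pi_i^{1}=\sigma_I$ when $i\notin A$, and the tensor product of these is $Q$. Together with $\sum_s E_s=\sigma_I^{\otimes n}$ this gives $\sum_{s:\,g_Q(s)=+1}E_s=\frac12(\sigma_I+Q)$ and $\sum_{s:\,g_Q(s)=-1}E_s=\frac12(\sigma_I-Q)$; pairing each with $\rho$ gives the stated outcome probabilities, and the signed sum gives $\E_\rho[g_Q(\rv s)]=\tr(\rho Q)$. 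Doing this for every $Q$ corresponding to $P$ simultaneously proves the observation: one shot of the $P$-measurement, post-processed by the various maps $s\mapsto g_Q(s)$, yields one honest sample of each such $Q$.

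I do not expect a genuine obstacle; the points that need care are purely organizational. ``The measurement of $P$'' has to mean the qubit-wise measurement $\{E_s\}$, not the spectral measurement of the operator $P$ (they differ because $P$ is highly degenerate), but this is exactly how the observation is phrased. One should also check the two endpoints --- $A=\emptyset$ gives $g_Q\equiv+1$ and $Q=\sigma_I^{\otimes n}$, matching $\tr(\rho\sigma_I^{\otimes n})=1$, and $A=\{1,\dots,n\}$ recovers the outcome of $P$ itself --- and note that the same bit $s_i$ feeds the estimators of several different $Q$'s, which is the whole point. If one prefers to avoid the factorization, the identity $\sum_s g_Q(s)E_s=Q$ also drops out of expanding $E_s=2^{-n}\sum_{T\subseteq[n]}(-1)^{\sum_{i\in T}s_i}P^{(T)}$ with $P^{(T)}=\bigotimes_{i\in T}P_i$ and applying character orthogonality $\sum_{s}(-1)^{\sum_{i\in T\triangle A}s_i}=2^{n}\,\one{T=A}$ to collapse the sum to $P^{(A)}=Q$.
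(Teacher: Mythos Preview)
Your proposal is correct and follows the same underlying idea as the paper. The paper does not give a formal proof of this observation; it only motivates it with the two-qubit example (measuring $\sigma_X\otimes\sigma_Y$ yields samples of $\sigma_X\otimes\sigma_I$, $\sigma_I\otimes\sigma_Y$, and $\sigma_I\otimes\sigma_I$ by appropriate $\pm$-combinations of the four outcome probabilities) and then states the general claim. Your argument via the operator identity $\sum_{s} g_Q(s)\,E_s = Q$, established by the qubit-wise factorization $\Pi_i^{0}\pm\Pi_i^{1}\in\{P_i,\sigma_I\}$, is precisely the rigorous content behind that example and makes the observation a theorem rather than a heuristic.
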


\subsection{State Tomography Using Pauli Measurements}
Our measurement scheme follows: For any $\epsilon>0$, fix an integer $m$.

1. For any $P\in\{\sigma_X,\sigma_Y,\sigma_Z\}^{\otimes n}$, one performs $m$ times $P$ on $\rho$, and records the $m$ samples of the $2^n$-dimensional outcome distribution.

According to the key observation, this measurement scheme provides $m\cdot 3^{n-w}$ samples of the expectation $\tr(\rho P)$, say, $\frac{\mu_P}{m\cdot 3^{n-w}}$, for each Pauli operator $P\in \{\sigma_I,\sigma_X,\sigma_Y,\sigma_Z\}^{\otimes n}$ with weight $w$, where $-m\cdot 3^{n-w}\leq\mu_P\leq m\cdot 3^{n-w}$.

2. Output 
\begin{align*}
\sigma=\sum_P \frac{\mu_P}{m\cdot 3^{n-w}\cdot 2^n} P.
\end{align*}

Using this scheme, we obtained $m\cdot 3^n$ independent samples,
\begin{align*}
X_1,X_2,\cdots, X_{m\cdot 3^n},
\end{align*}
each  $X_i$ is an $n$-bit string recording outcomes on all qubits (using bit 0 to denote the +1 eigenvalue and bit 1 to denote the -1 eigenvalue of the measured Pauli operator).   
Given that each operator is measured $m$ times, specifically, we assign that $X_1,X_2,\cdots,X_{m}$ correspond to the measurement $\sigma_X^{\otimes n}$, $X_{m+1},X_{m+2}$, and $\cdots,X_{2m}$ corresponds to the measurement $\sigma_X^{\otimes n-1}\otimes \sigma_Y$, \dots, and until $\sigma_Z^{\otimes n}$.

We observe that for any $P$ of weight $w$,
$\mu_P=\sum_{j=0}^{m\cdot 3^{n-w}-1} Z_j$,
where $Z_j$ are independent samples from the distribution $Z$
\begin{align*}
\mathrm{Pr}(Z=1)=\frac{1+\tr (\rho P)}{2},\\
\mathrm{Pr}(Z=-1)=\frac{1-\tr (\rho P)}{2}.
\end{align*}
Clearly,  we have
\begin{align*}
\mathbb{E}(Z)&=\tr (\rho P), \mathbb{E}(Z^2)=1,\\
\mathbb{E}(\mu_P)&=m\cdot 3^{n-w}\cdot\tr (\rho P),\\
\mathbb{E}(\mu_P^2)&=\mathbb{E}(\mu_P)^2+var(\mu_P)=\mathbb{E}(\mu_P)^2+m\cdot 3^{n-w}[var(Z)]=m^2\cdot 9^{n-w}\cdot \tr^2 (\rho P)+m\cdot 3^{n-w}(1-\tr^2 (\rho P)).
\end{align*}
Furthermore, $Z_j$ can be obtained from samples
\begin{align*}
X_1,X_2,\cdots, X_{m\cdot 3^n}.
\end{align*}

Therefore,
\begin{align*}
\sigma=\sum_P \frac{\mu_P}{m\cdot 3^{n-w_P}\cdot 2^n} P.
\end{align*}
is defined according to the samples
\begin{align*}
X_1,X_2,\cdots, X_{m\cdot 3^n}.
\end{align*}
We verify that
\begin{align*}
\mathbb{E}\sigma=\rho,
\end{align*}
where the expectation is taken over the probabilistic distribution according to the measurements.

For any 
\begin{align*}
\rho=\sum_P \frac{\alpha_P}{2^n} P,
\end{align*}
we can define the function $f:X_1\times X_2\times \cdots\times X_{m\cdot 3^n}\mapsto \mathbb{R}$
\begin{align*}
f=||\rho-\sigma||_2=\sqrt{{\rm Tr}[(\rho-\sigma)^\dagger (\rho-\sigma)]}.
\end{align*}
According to Cauchy's inequality, we have
\begin{align*}
&\mathbb{E}f \\
\leq& \sqrt{\mathbb{E}f^2}\\
=& \sqrt{\mathbb{E} (\tr\rho^2-2\tr \rho\sigma+\tr\sigma^2)}\\
=& \sqrt{\mathbb{E}\tr\sigma^2-\tr\rho^2}\\
=& \sqrt{\frac{1}{2^n}\sum_P (\mathbb{E} \frac{\mu_P^2}{m^2\cdot 9^{n-w_P}}-\alpha_P^2)} \\
=& \sqrt{\frac{1}{2^n}\sum_P (\frac{m^2\cdot 9^{n-w_P}\cdot \alpha_P^2+m\cdot 3^{n-w_P}(1-\alpha_P^2)}{m^2\cdot 9^{n-w_P}}-\alpha_P^2)} \\
=&\sqrt{\frac{1}{m\cdot 2^n}\cdot{\sum_P \frac{1-\alpha_P^2}{3^{n-w_P}}}}\\
<&\sqrt{\frac{1}{m\cdot 2^n}\cdot{\sum_P \frac{1}{3^{n-w_P}}}}\\
=& \sqrt{\frac{1}{m\cdot 2^n}\cdot{\sum_{w_P=0}^n \frac{1}{3^{n-w_P}}{{n}\choose{w_P}}3^{w_P}}}\\
=&\sqrt{\frac{1}{m\cdot 6^n}\cdot (1+9)^n} \\
=& \sqrt{\frac{5^n}{m\cdot 3^n}}.
\end{align*}

For any sample $X_i$ corresponding to $P\in\{\sigma_X,\sigma_Y,\sigma_Z\}^{\otimes n}$, if only $X_i$ is changed, $\mu_Q$ would be changed only for those $Q\in\{\sigma_I,\sigma_X,\sigma_Y,\sigma_Z\}^{\otimes n}$ where
$Q$ is obtained by replacing some $\{\sigma_X,\sigma_Y,\sigma_Z\}$'s of $P$ by $\sigma_I$.
Moreover, the resultant value of $\mu_Q$ would change by two at most.
According to the triangle inequality, $f$ would change at most 
\begin{align*}
&\Big\Vert\sum_Q  \frac{\Delta\mu_Q}{m\cdot 3^{n-w_Q}\cdot 2^n} Q\Big\Vert_2\\
=&\sqrt{\sum_Q \frac{\Delta\mu_Q^2}{m^2\cdot 9^{n-w_Q}\cdot 2^n}}\\
\leq& \sqrt{\sum_Q \frac{2^2}{m^2\cdot 9^{n-w_Q}\cdot 2^n}}\\
=& \sqrt{\sum_{w_Q=0}^n \frac{2^2}{m^2\cdot 9^{n-w_Q}\cdot 2^n}{{n}\choose{w_Q}}}\\
=&\frac{2\cdot \sqrt{5}^n}{m\cdot 3^n},
\end{align*}
where $Q$ ranges over all Paulis which correspond to $P$'s, and $\Delta\mu_Q$ denotes the difference of $\mu_Q$ when $X_i$ is changed.

We only consider $\delta<1/3$, then $\log(1/\delta)>1$.
For any $\epsilon'>0$, by choosing $m=(3+2\sqrt{2})\cdot\frac{5^n\log\frac{1}{\delta}}{3^n\cdot \epsilon'^2}$, we have $\mathbb{E}f< (\sqrt{2}-1){\epsilon'}$.
Therefore,
\begin{align*}
&\mathrm{Pr}(f> \epsilon')\\
<&\mathrm{Pr}(f-\mathbb{E}f>(2-\sqrt{2}){\epsilon'}) \\
<&\exp(-\frac{(12-8\sqrt{2})\cdot \epsilon'^2}{4\cdot \frac{5^n}{m^2\cdot 9^n}\cdot m\cdot 3^n}) \\
=&\exp(-\frac{m\cdot(3-2\sqrt{2})\cdot 3^n\cdot\epsilon'^2}{5^n}) \\
<&\delta, 
\end{align*}
where the inequality is by Lemma \ref{mc} (McDiarmid’s inequality).

For a general quantum state and $\epsilon>0$, we let $\epsilon'=\frac{\epsilon}{\sqrt{2^n}}$, and know that
$||\rho-\sigma||_1>\epsilon$ implies $||\rho-\sigma||_2>\epsilon'$. Therefore,
\begin{align*}
&\mathrm{Pr}(||\rho-\sigma||_1>\epsilon)\\
\leq& \mathrm{Pr}(||\rho-\sigma||_2>\epsilon')\\
=&\mathrm{Pr}(f>\epsilon').
\end{align*}

The total number of used copies is 
\begin{align*}
m\cdot 3^n=(3+2\sqrt{2})\cdot\frac{10^n\log\frac{1}{\delta}}{\epsilon^2}.
\end{align*}

If we know the rank of $\rho$ is at most $r$, first we will find a $\sigma'$ with rank at most $r$, which minimizes
\begin{align*}
||\sigma-\sigma'||_2
\end{align*}
Accordingly, this implies that \begin{align*}
||\sigma-\sigma'||_2\leq ||\rho-\sigma||_2, 
\end{align*}
and we have
\begin{align*}
||\rho-\sigma'||_2\leq ||\sigma-\sigma'||_2+||\rho-\sigma||_2\leq 2||\rho-\sigma||_2. 
\end{align*}
By choosing $\epsilon'=\frac{\epsilon}{\sqrt{2r}}$, we know that
$||\rho-\sigma||_1>\epsilon$ implies $||\rho-\sigma||_2>\epsilon'$. We only need to choose $m=4(3+2\sqrt{2})\cdot\frac{5^n\log\frac{1}{\delta}}{3^n\cdot \epsilon'^2}$ to make $\mathbb{E}f<\frac{(\sqrt{2}-1)\epsilon'}{2}$,
Therefore,
\begin{align*}
&\mathrm{Pr}(||\rho-\sigma'||_1>\epsilon)\\
\leq&\mathrm{Pr}(||\rho-\sigma'||_2>\epsilon')\\
\leq&\mathrm{Pr}(||\rho-\sigma||_2>\frac{\epsilon'}{2})\\
\leq & \mathrm{Pr}(f-\mathbb{E}f>\frac{(2-\sqrt{2})\epsilon'}{2})\\
=&\exp(-\frac{m\cdot 3^n\cdot\epsilon'^2}{4(3+2\sqrt{2})\cdot 5^n})\\
<&\delta
\end{align*}
The total number of used copies is thus
\begin{align*}
m\cdot 3^n=8(3+2\sqrt{2})\cdot\frac{5^n\cdot r\cdot\log\frac{1}{\delta}}{\epsilon^2}.
\end{align*}
\subsection{Joint Measurement Lower bound}

In this subsection, we show that $\Omega(\frac{\log(n/\delta)}{\epsilon^2})$ copies are necessary for quantum overlapping tomography by proving the following: $\Omega(\frac{\log(n/\delta)}{\epsilon^2})$ copies are necessary for quantum overlapping tomography with $k=1$, even if general \textit{joint measurement} is used.

Since trace distance is non-increasing under the action of partial trace, we can conclude that any measurement scheme that can solve the quantum overlapping tomography problem for $k\geq 1$ automatically solves the case that $k=1$. Moreover, we focus on classical distributions to deal with general joint measurement schemes.

We first consider the following simple question. Given a binary random variable $X$ that obeys  either distribution $q_0=(1/2-\epsilon,1/2+\epsilon)$ or $q_1=(1/2+\epsilon,1/2-\epsilon)$, which distribution is the true distribution? For any fixed $m$, the number of tossing this coin, the best strategy is to toss the coin $m$ times and declare the index ($0$ or $1$) that appears less.

Let the $X_1,X_2,\dots,X_m$ be the $m$ samples of $q_1$ and any $0\leq t\leq 2m\epsilon$, we then employ the result in~\cite{Mousavi2016} and obtain:
\begin{align}
\mathrm{Pr}\Big(\sum_{i=1}^m X_i>t+m(1/2-\epsilon)\Big)\geq \frac{1}{4}\cdot \exp\Big(-\frac{2t^2}{m(1/2-\epsilon)}\Big).
\end{align}
By choosing $t=m\epsilon$, $\mathrm{Pr}\Big(\sum_{i=1}^m X_i>t+m(1/2-\epsilon)\Big)=\mathrm{Pr}\big(\sum_{i=1}^m X_i>m/2\big)$ is the probability of answering $q_0$, a lower bound on the failure probability.

To succeed with probability at least $1-\delta'$, we must have
\begin{align}
\delta'\geq\frac{1}{4}\cdot \exp\Big(-\frac{2m\epsilon^2}{1/2-\epsilon}\Big).
\end{align}
That is, $\frac{1-2\epsilon}{4\epsilon^2}\log(\frac{1}{4\delta'})$ samples are needed to distinguish $q_0$ and $q_1$ with confidence at least $1-\delta'$. 

By referring back to our problem of showing $\Omega\big(\frac{\log(n/\delta)}{\epsilon^2}\big)$ samples of $n$-qubit state $\rho$ are necessary to solve the quantum overlapping tomography for $k\geq 1$, to within additive error $\epsilon$ and confidence at least $1-\delta$, we consider the classical distributions $p_{i_1,i_2,\cdots,i_n}=q_{i_1}\otimes q_{i_2}\otimes\cdots \otimes q_{i_n}$, where each $i$ is either 0 or 1 and $q_0=(1/2-\epsilon,1/2+\epsilon)$ and $q_1=(1/2+\epsilon,1/2-\epsilon)$. In total, there are $2^n$ different distributions.

Suppose a quantum procedure $\mathcal{A}$ uses $m$ copies of $\rho$ to accomplish the quantum overlapping tomography with at least $1-\delta$ probability. 

Let $Z_1,Z_2,\cdots,Z_n$ be random variables that obey the uniform binary distribution.
Choose each $p_{Z_1,Z_2,\cdots,Z_n}$ with probability $1/2^n$, and apply $\mathcal{A}$ on $m$ copies of $p_{Z_1,Z_2,\cdots,Z_n}$. Because the $\ell_1$ norm is non-increasing under partial trace, we know that according to the output of $\mathcal{A}$, we can successfully recover the indices $Z_1,Z_2,\cdots,Z_n$ with probability at least $1-\delta$.

In the following, we first observe that any quantum procedure does not help in recovering $Z_1,Z_2,\cdots,Z_n$ from samples of $p_{Z_1,Z_2,\cdots,Z_n}$.
We assume the joint measurement $(M_{0,0,\cdots,0},\cdots,M_{1,1,\cdots,1})$ applied on $m$ copies (samples) of $p$ such that the measurement outcome $M_{i_1,i_2,\cdots,i_n}$ allows us to answer $Z_1=i_1,Z_2=i_2,\cdots,Z_n=i_n$. Here $M_{i_1,i_2,\cdots,i_n}$'s are $2^{mn}\times 2^{mn}$ matrices.

First, we observe that $p_{Z_1,Z_2,\cdots,Z_n}$'s are all diagonal, so are $p_{Z_1,Z_2,\cdots,Z_n}^{\otimes m}$.
Hence, the off-diagonal elements of $M_{i_1,i_2,\cdots,i_n}$ does not affect this task.
Therefore, we only need to consider the procedure in the following two steps: The first step measures $m$ copies of $p_{Z_1,Z_2,\cdots,Z_n}$'s in the diagonal basis, and the second step outputs according to certain probability distributions.

The first step ensures that we only measure each copy of $p_{Z_1,Z_2,\cdots,Z_n}$'s on a diagonal basis since there is no difference. According to the convexity of the successful probability, we know that the deterministic function works best in the second step, declaring the index ($0$ or $1$) that appears less for each $1\leq j\leq n$.

Now, we assume the output random variable is $Y_1,Y_2,\cdots,Y_n$, our goal is
\begin{align}
\mathrm{Pr}(Y_1=Z_1,Y_2=Z_2,\cdots,Y_n=Z_n)\geq 1-\delta.
\end{align}
By Bayes' theorem, we know that
\begin{align*}
&\mathrm{Pr}(Y_1=Z_1,Y_2=Z_2,\cdots,Y_n=Z_n)\\
=&\mathrm{Pr}(Y_1=Z_1)\times \mathrm{Pr}(Y_2=Z_2|Y_1=Z_1)\times\cdots\times \mathrm{Pr}(Y_n=Z_n|Y_1=Z_1,\cdots,Y_{n-1}=Z_{n-1})\\
\leq& (1-\delta')\cdot (1-\delta')\cdot\cdots\cdot(1-\delta')\\
=& (1-\delta')^n,
\end{align*}
where we use the fact that $p_{Z_1,Z_2,\cdots,Z_n}$'s are all in a tensor product form, and therefore, $\mathrm{Pr}(Y_2=Z_2|Y_1=Z_1)=\mathrm{Pr}(Y_2=Z_2)$, and so on.
We note that we have also denoted by $(1-\delta')$ the successful probability of discriminating $q_0$ and $q_1$ with $m$ copies. 

Therefore, we require that 
\begin{align*}
(1-\delta')^n>1-\delta.
\end{align*}
That is $\delta'=\Theta(\frac{\delta}{n})$. It implies the bound $m=\Omega(\frac{\log(n/\delta)}{\epsilon^2})$.

\subsection{Pauli Measurement Upper bound}
In this subsection, we analyze Algorithm 1:
\begin{algorithm}
\caption{Quantum Overlapping Tomography by Pauli Measurements}
\begin{algorithmic}
\item Repeat the following measurement $32\cdot 10^k\cdot\epsilon^{-2}\cdot\log\big(2{{n}\choose{k}}/\delta\big)$ times\;
\item For $i=1$ to $n$:
measure the $i$-th qubit in a randomly chosen basis from $\{\sigma_X,\sigma_Y,\sigma_Z\}$\;
\end{algorithmic}
\end{algorithm}

By estimating the binomial coefficient, we observe the following, 
\begin{observation}
In Algorithm 1, for any $S=\{i_1,i_2,\cdots,i_k\}\subseteq\{1,2,\cdots,n\}$, 
with probability at least $1-\frac{2^m}{e^m}\cdot 3^k$, each $P\in \{\sigma_X,\sigma_Y,\sigma_Z\}^{\otimes k}$ was measured for estimating $\rho_S$ at least $m=(3+2\sqrt{2})\cdot \frac{10^k\cdot\log(2{{n}\choose{k}}/\delta)}{3^k\cdot\epsilon^{2}}$ times.
\end{observation}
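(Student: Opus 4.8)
The plan is to prove Observation~3.9 (the claim that each Pauli $P\in\{\sigma_X,\sigma_Y,\sigma_Z\}^{\otimes k}$ is sampled sufficiently often for every fixed $k$-subset $S$) by a direct Chernoff-type computation followed by a union bound. First I would fix an arbitrary $S=\{i_1,\dots,i_k\}$ and an arbitrary target pattern $P=P_{i_1}\otimes\cdots\otimes P_{i_k}\in\{\sigma_X,\sigma_Y,\sigma_Z\}^{\otimes k}$. In each of the $N:=32\cdot 10^k\cdot\epsilon^{-2}\cdot\log(2\binom{n}{k}/\delta)$ rounds of Algorithm~1, the bases on the $k$ coordinates in $S$ are chosen independently and uniformly from $\{\sigma_X,\sigma_Y,\sigma_Z\}$, so the probability that this particular round matches $P$ on all of $S$ is exactly $3^{-k}$, independently across rounds. (By the Pauli-measurement observation, a round that matches $P$ on $S$ also yields a sample of every $Q$ supported on $S$ that ``corresponds'' to $P$, but for the lower bound on counts it suffices to track exact matches.) Hence the number of useful rounds for $(S,P)$ is a $\mathrm{Binomial}(N,3^{-k})$ random variable with mean $\mu_0:=N/3^k = 32\cdot(10/3)^k\cdot\epsilon^{-2}\cdot\log(2\binom nk/\delta)$.

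Next I would apply a multiplicative Chernoff bound for the lower tail: for a sum $T$ of independent Bernoulli variables with mean $\mu_0$ and any $0<\beta<1$, $\Pr[T\le(1-\beta)\mu_0]\le \exp(-\beta^2\mu_0/2)$. The goal is to show $T\ge m:=(3+2\sqrt2)\cdot 10^k\log(2\binom nk/\delta)/(3^k\epsilon^2)$ with the claimed probability. Since $3+2\sqrt2\approx 5.83$ while $\mu_0/ (10/3)^k$ carries the constant $32$, we have $m/\mu_0 = (3+2\sqrt2)/32 < 1/5$, so we may take $\beta = 1-(3+2\sqrt2)/32 > 4/5$; then $\exp(-\beta^2\mu_0/2)\le \exp(-\mu_0\cdot(4/5)^2/2) = \exp(-\tfrac{8}{25}\mu_0)$. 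One checks $\tfrac{8}{25}\mu_0 = \tfrac{8}{25}\cdot 32\cdot(10/3)^k\epsilon^{-2}\log(2\binom nk/\delta)\ge m$ comfortably (indeed $\ge 10\cdot(10/3)^k\log(2\binom nk/\delta)\ge m$), which also delivers the stated failure bound: the exponent is at least $m$, and since the per-round matching probability is $3^{-k}$ and there are $3^k$ patterns $P$ to cover, a union bound over the $3^k$ choices of $P$ (for the fixed $S$) multiplies the failure probability by $3^k$, giving the ``$\tfrac{2^m}{e^m}\cdot 3^k$'' form written in the statement (i.e., $3^k\exp(-m\log(e/2))$, using $e/2>1$ so $\log(e/2)>0$, which is the $\frac{2^m}{e^m}$ factor). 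I would present the constants so that the bookkeeping makes the exponent land exactly on $m\log(e/2)$ as claimed.

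Finally I would observe that this argument is stated for one fixed $S$; the later use of this observation in the algorithm's correctness proof will combine it with a union bound over the $\binom nk$ subsets $S$, which is why the $\log(2\binom nk/\delta)$ already appears inside $m$ and why the $32$ in $N$ is chosen with room to spare. I expect the only ``obstacle'' to be cosmetic: choosing the Chernoff formulation and the split of constants so that the final failure probability comes out in precisely the advertised closed form $\frac{2^m}{e^m}\cdot 3^k$ rather than some looser $\exp(-c\,\mu_0)$ bound; this is a matter of picking the right deviation parameter and verifying one numerical inequality relating $32$, $3+2\sqrt2$, and $\log(e/2)$, not a conceptual difficulty. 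No hypothesis beyond independence of the per-round basis choices and the Pauli-measurement observation is needed.
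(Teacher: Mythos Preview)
Your approach is correct and is exactly what the paper intends: the paper gives no proof beyond the sentence ``By estimating the binomial coefficient, we observe the following,'' and your Chernoff-plus-union-bound argument is the natural way to cash this out. One small reassurance on your ``cosmetic obstacle'': you do not need to tune constants to land exactly on $(2/e)^m$. Your computation already gives a per-pattern failure probability at most $\exp(-\tfrac{8}{25}\mu_0)$ with $\tfrac{8}{25}\mu_0 = \tfrac{256}{25}\cdot(10/3)^k\epsilon^{-2}\log(2\binom nk/\delta) \ge m$ (note the $\epsilon^{-2}$ should be retained in your intermediate inequality; both sides carry it, and the comparison reduces to $256/25 > 3+2\sqrt2$). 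Hence the per-pattern failure is at most $e^{-m}\le (2/e)^m$, and the union bound over the $3^k$ patterns yields the stated $3^k(2/e)^m$.
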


According to Theorem 1, the tomography of $\rho_S$ with trace distance error $\epsilon$ was successful with probability at least 
\begin{align*}
1- \Big[3^k\cdot \frac{2^m}{e^m}+\frac{\delta}{2{{n}\choose{k}}}\Big]> 1-\frac{\delta}{{{n}\choose{k}}},
\end{align*}
where the last inequality follows from
\begin{align*}
3^k\cdot \frac{2^m}{e^m}<  3^k \Big(\frac{2}{e}\Big)^{12\cdot 3^k}\cdot \Big(\frac{2}{e}\Big)^{4\cdot 3^k\cdot \log\big(2{{n}\choose{k}}/\delta\big)}<  3^k \Big(\frac{2}{e}\Big)^{12\cdot k}\cdot \Big(\frac{1}{e}\Big)^{ \log\big(2{{n}\choose{k}}/\delta\big)}
< \frac{\delta}{2{{n}\choose{k}}}.
\end{align*}
By the union bound, the quantum overlapping tomography with trace distance error $\epsilon$ was successful with a probability of at least 
$1- {{n}\choose{k}}\cdot \frac{\delta}{{{n}\choose{k}}}=1-\delta$.


\section{A property of the unique ground state}

In this section, we illustrate the following relationship between the unique ground state of the local Hamiltonian and UDA.

Let $G=\{s_1,\cdots,s_m\}$ be the interaction graph.
Let us look at the set of all $n$-qubit mixed states 
\begin{align*}
\mathcal{D}=\{\rho| \rho^\dagger=\rho,\,\rho\geq 0,\ \tr \rho=1\},
\end{align*}
and a linear map $L$ which maps the set $\mathcal{D}$ into the set of tuples of reduced density matrices
\begin{align*}
\mathcal{R}=\{(\sigma_{s_1},\cdots,\sigma_{m})| \forall \sigma\in \mathcal{D}\}.
\end{align*}
We note that $\mathcal{R}$ is convex.  
\begin{fact}\label{Hamiltonian}
(i) A quantum state $\ket{\psi}$ is UDA by its $k$-local reduced density matrices on the interaction graph $G$ only if the tuple of its reduced density matrices is an extreme point of $\mathcal{R}$.

(ii) A quantum state $\ket{\phi}$ is the unique ground state of a Hamiltonian with the interaction graph $G$ only if the tuple of its reduced density matrices is an exposed point \footnote{An exposed point of a convex set is a point at which some continuous linear functional attains its strict maximum (or equivalently minimum in the case of Hamiltonians) over the convex set.} of $\mathcal{R}$.

(iii) A quantum state $\ket{\phi}$ is the unique ground state of a Hamiltonian with the interaction graph $G$ only if it is UDA by its reduced density matrices on the interaction graph $G$. Conversely, if a pure quantum state $|\phi\rangle$ is UDA by its reduced density matrices on some interaction graph $G$ and the tuple of these matrices is also an exposed point, then 
$|\phi\rangle$ is a unique ground state of some local Hamiltonian with the same interaction graph $G$.

(iv) The exposed point does not always imply UDA, even UDP.
\end{fact}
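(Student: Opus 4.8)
The plan is to establish the four parts by working with the geometry of the convex body $\mathcal{R}$ of reduced-density-matrix tuples and its relationship to the affine map $L$ and to local Hamiltonians. Throughout I will use the standard pairing that a $k$-local Hamiltonian $H=\sum_s H_s$ on the interaction graph $G$ acts on a state only through its reduced density matrices: $\tr(H\rho)=\sum_s \tr(H_s\,\rho_s)=\langle \vec H, L(\rho)\rangle$ for the tuple $\vec H=(H_{s_1},\dots,H_{s_m})$, so minimizing energy over $\mathcal{D}$ is the same as minimizing the linear functional $\langle \vec H,\cdot\rangle$ over $\mathcal{R}$.

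For part (i), I would argue the contrapositive. If $L(\psi)$ is not extreme in $\mathcal{R}$, write $L(\psi)=\tfrac12(r_1+r_2)$ with $r_1\neq r_2$ in $\mathcal{R}$; pull back to states $\rho_1,\rho_2\in\mathcal{D}$ with $L(\rho_i)=r_i$. Then $\tfrac12(\rho_1+\rho_2)$ is a mixed state distinct from $\psi$ (since its reduced tuple is $L(\psi)$ but $\rho_1,\rho_2$ differ on some $s_j$, so at least one of them, hence the average, differs from $\psi$ somewhere — or more carefully, $\psi$ being pure and UDA-violated means some state other than $\psi$ has the same marginals) sharing all $k$-local marginals with $\psi$, contradicting UDA. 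The only subtlety is making sure one genuinely produces a state $\neq\psi$; this follows because $L(\rho_1)\neq L(\psi)$ forces $\rho_1\neq\psi$ while $L(\rho_1+\psi-\dots)$ — cleanest is: $\tfrac12(\rho_1+\rho_2)\neq\psi$ because otherwise $\rho_1,\rho_2$ would both be rank-one perturbations summing to a pure state, forcing $\rho_1=\rho_2=\psi$ and hence $r_1=r_2$.

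For part (ii), if $\ket{\phi}$ is the unique ground state of $H=\sum_s H_s$ with graph $G$, then (shifting so the ground energy is attained) the functional $\langle \vec H,\cdot\rangle$ attains its minimum over $\mathcal{D}$ exactly on states supported in the ground space, i.e. exactly at $\phi$; pushing forward, $\langle\vec H,\cdot\rangle$ attains its minimum over $\mathcal{R}$ exactly at $L(\phi)$ (uniqueness downstairs follows from uniqueness upstairs: any $r\in\mathcal{R}$ achieving the minimum lifts to a minimizing $\rho\in\mathcal{D}$, which must be $\phi$, so $r=L(\phi)$). That is precisely the statement that $L(\phi)$ is an exposed point of $\mathcal{R}$. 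Part (iii) then combines (i)-flavored and (ii)-flavored reasoning: being the unique ground state gives, by the above, that $L(\phi)$ is exposed hence extreme, and moreover the minimizing $\rho$ upstairs is unique and equals $\phi$, which is exactly UDA — so I would actually prove the UDA conclusion directly from uniqueness of the energy minimizer upstairs rather than routing through extremality. For the converse in (iii): given that $\ket{\phi}$ is UDA and $L(\phi)$ is exposed, pick a linear functional $\langle\vec H,\cdot\rangle$ on the tuple space attaining its strict minimum over $\mathcal{R}$ at $L(\phi)$, and let $H=\sum_s H_s$ be the corresponding local Hamiltonian (adding multiples of identity to each $H_s$ to ensure $0\le H_s\le I_s$, which only shifts the energy). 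Any ground state $\rho$ of $H$ minimizes $\langle\vec H,L(\rho)\rangle$, so $L(\rho)=L(\phi)$ by the exposedness, and then UDA forces $\rho=\phi$; hence $\phi$ is the unique ground state. The main obstacle here is the bookkeeping that the Hermitian functional dual to the tuple space is realized by an honest local Hamiltonian term-by-term and that the normalization $0\le H_s\le I_s$ can be arranged without destroying the strict-minimum property — this is routine but must be stated carefully.

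For part (iv) I would exhibit a concrete counterexample rather than argue abstractly: a pure state $\ket{\phi}$ whose marginal tuple is an exposed point of the relevant $\mathcal{R}$ yet which is not even UDP. The natural candidates are small-system states where the reduced-density-matrix compatibility geometry is fully understood — e.g. the classic examples from the quantum-marginal / $N$-representability literature showing exposed-but-not-UDA or the GHZ-type states which share all $k$-local marginals (for $k$ below the system size) with a mixture of computational-basis states. I would cite the known construction (the reference \cite{Xin_2017} already noted in the preliminaries that UDP does not imply UDA is a good anchor, and the relevant example can be adapted), verify exposedness by writing down the explicit separating/supporting Hamiltonian, and verify the UDP failure by displaying a second pure state with the same marginals. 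The expected difficulty is purely in selecting an example clean enough that exposedness is checkable by hand.
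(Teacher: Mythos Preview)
Your treatment of parts (i)--(iii) is correct and follows essentially the same route as the paper: pull back convex combinations along $L$ and use that a pure state is extreme in $\mathcal{D}$ for (i); read a local Hamiltonian as a linear functional on $\mathcal{R}$ for (ii); and for (iii) argue UDA directly from uniqueness of the energy minimizer, and conversely realize an exposing functional as a local Hamiltonian and invoke UDA to get uniqueness of the ground state. The paper does the same, with only cosmetic differences (it phrases (i) as a direct proof rather than the contrapositive, and it does not bother with the $0\le H_s\le I_s$ normalization in the converse of (iii)).

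The gap is in part (iv). You do not actually produce a counterexample; you only promise to find one and point toward GHZ-type states or the $N$-representability literature. Those directions are heavier than needed, and for GHZ in particular it is not at all clear that the tuple of its $(n{-}1)$-body marginals is an \emph{exposed} point of the relevant $\mathcal{R}$, so that suggestion may not even work. The paper's example is far simpler: take $n=2$ and the interaction graph $G=\{\{1\}\}$ consisting of a single one-qubit subsystem. Then $\mathcal{R}$ is just the set of all one-qubit states, the point $\op{0}{0}$ is exposed (a pure state on the boundary of the Bloch ball, exposed e.g.\ by the functional $\rho\mapsto\tr(\op{1}{1}\rho)$), yet every product state $\ket{0}\otimes(\alpha\ket{0}+\beta\ket{1})$ has this marginal on qubit $1$, so neither UDA nor UDP holds. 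You should replace the placeholder in your (iv) with an explicit example of this kind; the one above is minimal and requires no outside citations.
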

\begin{proof}
(i) Suppose $\ket{\psi}$ is UDA by its $k$-local reduced density matrices, we will prove $(\psi_{s_1},\cdots,\psi_{m})$
is an extreme point of $\mathcal{R}$.

We denote the $k$-reduced density matrices of $\ket{\psi}$ as $\psi_{s_i}$ for $1\leq i\leq m$.

For any two points $(\sigma_{s_1},\cdots,\sigma_{s_{m}})$ and $ (\tau_{s_1},\cdots,\tau_{s_{m}})\in \mathcal{R}$ such that
\begin{align*}
\frac{(\sigma_{s_1},\cdots,\sigma_{s_{m}}) + (\tau_{s_1},\cdots,\tau_{s_{m}})}{2}=(\psi_{s_1},\cdots,\psi_{s_{m}}),
\end{align*}
we first examine pre-images of $(\sigma_{s_1},\cdots,\sigma_{s_{m}})$ and $(\tau_{s_1},\cdots,\tau_{s_{m}})$, denoted as $\sigma$ and $\tau$, respectively.
We know that $\frac{\sigma+\tau}{2}\in \mathcal{D}$, and it shares the same set of $k$-local reduced density matrices with $\psi$, as the map $L$ is linear. 
Because $\ket{\psi}$ is UDA by its $k$-local reduced density matrices, we thus infer that
\begin{align*}
\frac{\sigma+\tau}{2}=\op{\psi}{\psi}.
\end{align*}
Because $\op{\psi}{\psi}$ is a pure state and hence is an extreme point of $\mathcal{D}$, we conclude that
\begin{align*}
\sigma=\tau=\op{\psi}{\psi}.
\end{align*}
This leads to
\begin{align*}
(\sigma_{s_1},\cdots,\sigma_{s_{m}})=(\tau_{s_1},\cdots,\tau_{s_{m}})=(\psi_{s_1},\cdots,\psi_{s_{m}}).
\end{align*}
Therefore, $(\psi_{s_1},\cdots,\psi_{s_{m}})$ is an extreme point of the convex set $\mathcal{R}$.



(ii) If $\ket{\psi}$  is the unique ground state of a $k$-local Hamiltonian $H=\sum_{s_i} H_{s_i}$. We observe that
\begin{align*}
\tr(H\op{\psi}{\psi})=\sum_{s_i}\tr (H_{s_i}\psi_{s_i}). 
\end{align*}
This means we can regard $H$ as a linear functional on the space $\mathcal{R}$ and reaches its strict extremum (i.e., minimum here) at the $(\psi_{s_1},\cdots,\psi_{s_{m}})$. This means $(\psi_{s_1},\cdots,\psi_{s_{m}})$ is an exposed point of $\mathcal{R}$.

(iii) If $\ket{\psi}$ is the unique ground state of $H$, then
\begin{align*}
\tr(H \psi)=\sum_{s_i} \tr(H_{s_i} \psi_{s_i})
\end{align*}
is the ground-state energy.
For $\rho=\sum_{j} p_j \op{\phi_j}{\phi_j}$ such that
\begin{align*}
\rho_{s_i}=\psi_{s_i}, \ \ \ \forall 1\leq i\leq m,
\end{align*}
we have ${\rm Tr}(H\rho)={\rm Tr}(H\phi)$, i.e., $\rho$ has the same energy as the ground-state energy. Because of the uniqueness of the ground state, this further implies that $\rho=\psi$.
In other words, the ground state $\ket{\psi}$  is UDA by its local reduced density matrices on $G$.

Suppose $\ket{\phi}$ is UDA by its local reduced density matrices on $G$, which is also an exposed point. 
There exists a hyperplane that contains $(\phi_{s_1},\cdots,\phi_{s_{m}})$ while keeping all other points in $\mathcal{R}$ at one side on this plane. This hyperplane corresponds to a linear function $f:\mathcal{R}\mapsto \mathbb{R}$ such that
\begin{align*}
&f(\rho_{s_1},\cdots,\rho_{s_{m}})\geq 0, \ \forall (\rho_{s_1},\cdots,\rho_{s_{m}})\in\mathcal{R}\\
&f(\phi_{s_1},\cdots,\phi_{s_{m}})= 0.
\end{align*}
Any linear function on $\mathcal{R}$ is of form
\begin{align*}
f(\rho_{s_1},\cdots,\rho_{s_{m}})=\sum_{s_i} \tr(H_{s_i}\rho_{s_i}),
\end{align*}
for a set of Hermitian operators $\{H_{s_i}\}$. 
Hence,
$\ket{\psi}$ is a unique ground state of the $k$-local Hamiltonian $H=\sum_{s_i} H_{s_i}$. 

Any ground state $\rho$ of $H=\sum_{s_i} H_{s_i}$ must satisfies $\rho_{s_i}=\phi_i$.
Because it is UDA, we know that $\ket{\phi}$ is the unique ground state of $H$.

(iv) Consider the very simple case $n=2$ and $G=\{s\}$ with $s=\{1\}$.
Now $(\op{0}{0})$ is an exposed point of the $\mathcal{R}$, which is the set of all $1$-qubit states.
Any two-qubit product state $|0\rangle\otimes(\alpha|0\rangle+\beta|1\rangle)$ is compatible with this exposed point, and hence UDA nor UDP does not hold.

\end{proof}

We remark that the geometry of reduced-density matrices has been extensively studied in previous works. From the geometric picture,  the authors of~\cite{Verstraete_2006} observed that, for an interacting spin system, ``the most extreme points in the convex set of reduced density operators uniquely characterize a state.'' Reference~\cite{https://doi.org/10.48550/arxiv.1606.07422} studied the geometry of reduced density matrices by projecting the set to $\mathbb{R}^3$ from the joint numerical range point of view~\cite{Pucha_a_2011,Gawron_2010}. The relationship between UDA and a local Hamiltonian's unique ground state has been studied by exploring the extreme and exposed points of the convex body in \cite{Wyderka_2018,Karuvade_2019}.

Fact \ref{Hamiltonian} indicates that the tuple of reduced density matrices is a fingerprint of the unique ground state of the local Hamiltonian. Here, we do not explicitly construct $H$, which depends on the detailed structure of $\mathcal{R}$. Moreover, $H$ is generally not frustration-free. The following observation shows this fingerprint is robust for the gapped local Hamiltonian.

\begin{lemma}\label{lemma:4.2}\label{ground}
Let $\ket{\psi}$ be the unique ground state of a $k$-local Hamiltonian $H=\sum_i H_{s_i}$ with $0\leq H_{s_i}\leq I_{s_i}$, gap $\Delta>0$ and interaction graph $G=\{s_1,\cdots,s_m\}$, for any state $\rho$, one of the following conditions must be satisfied:
\begin{enumerate}
\item $||\psi-\rho||_1<\eps$;
\item  $||\psi_{s_i}-\rho_{s_i}||_1>\frac{\Delta \eps^2}{4 {m}}$ for some $i$,
\end{enumerate}
where $\psi\equiv\op{\psi}{\psi}$ denotes the density matrix of state $|\psi\rangle$ and $\psi_{s_i}$ denotes the corresponding reduced density matrix supported on the subsystem $s_i$.
\end{lemma}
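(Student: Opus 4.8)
The plan is to sandwich the energy $\tr(H\rho)$ between an upper bound coming from closeness of the marginals and a lower bound coming from the spectral gap, which is the standard way to control states near the ground space of a gapped Hamiltonian.

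For the upper bound, I would use that $H=\sum_i H_{s_i}$ is local, so $\tr(H\rho)=\sum_i\tr(H_{s_i}\rho_{s_i})$ and $\tr(H\psi)=\sum_i\tr(H_{s_i}\psi_{s_i})=:E_0$, the ground energy (note $H$ is not assumed frustration-free, so I carry $E_0$ along; it will cancel). Subtracting, $\tr(H\rho)-E_0=\sum_i\tr\!\big(H_{s_i}(\rho_{s_i}-\psi_{s_i})\big)$. Writing the traceless Hermitian difference as $\rho_{s_i}-\psi_{s_i}=M_i^+-M_i^-$ with $M_i^\pm\ge 0$ and $\tr M_i^\pm=\tfrac12\|\rho_{s_i}-\psi_{s_i}\|_1$, and using $0\le H_{s_i}\le I_{s_i}$ (so that $\tr(H_{s_i}M_i^+)\le\tr M_i^+$ and $\tr(H_{s_i}M_i^-)\ge 0$), each term is at most $\tfrac12\|\rho_{s_i}-\psi_{s_i}\|_1$, giving
\[
\tr(H\rho)-E_0\ \le\ \tfrac12\sum_{i}\|\rho_{s_i}-\psi_{s_i}\|_1 .
\]

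For the lower bound, uniqueness of the ground state means the orthogonal complement of $\ket{\psi}$ is spanned by eigenvectors with eigenvalue at least $E_0+\Delta$, hence $H\ge (E_0+\Delta)I-\Delta\,\op{\psi}{\psi}$ and therefore $\tr(H\rho)\ge E_0+\Delta\big(1-\bra{\psi}\rho\ket{\psi}\big)$. Combining the two bounds,
\[
\Delta\big(1-\bra{\psi}\rho\ket{\psi}\big)\ \le\ \tfrac12\sum_{i}\|\rho_{s_i}-\psi_{s_i}\|_1 .
\]
Now I branch: if item 2 fails, i.e.\ $\|\psi_{s_i}-\rho_{s_i}\|_1\le\frac{\Delta\epsilon^2}{4m}$ for every $i$, the right-hand side is at most $\frac{\Delta\epsilon^2}{8}$, so $1-\bra{\psi}\rho\ket{\psi}\le\epsilon^2/8$. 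Finally, invoking the Fuchs--van de Graaf inequality for the pure state $\psi$ and arbitrary $\rho$, $\|\psi-\rho\|_1\le 2\sqrt{1-\bra{\psi}\rho\ket{\psi}}\le 2\sqrt{\epsilon^2/8}=\epsilon/\sqrt{2}<\epsilon$, which is item 1.

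Essentially every step is a standard inequality, so I do not expect a genuine obstacle; the two points that need a little care are that the Hamiltonian need not be frustration-free — so the absolute ground energy $E_0$ must be tracked and seen to cancel, leaving only the gap $\Delta$ — and the estimate $\tr\!\big(H_{s_i}(\rho_{s_i}-\psi_{s_i})\big)\le\tfrac12\|\rho_{s_i}-\psi_{s_i}\|_1$, where it is the positivity $H_{s_i}\ge 0$ (not merely $\|H_{s_i}\|\le 1$) that supplies the factor $\tfrac12$ and hence the strictness in item 1.
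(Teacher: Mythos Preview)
Your proof is correct and follows essentially the same route as the paper: both use the operator inequality $H\ge (E_0+\Delta)I-\Delta\op{\psi}{\psi}$ to bound the excess energy from below by $\Delta(1-\bra{\psi}\rho\ket{\psi})$, split the energy difference into local terms $\sum_i\tr\!\big(H_{s_i}(\rho_{s_i}-\psi_{s_i})\big)$, and close with the fidelity--trace-distance relation for a pure state. The only cosmetic differences are that the paper argues in the opposite contrapositive (assume $\|\psi-\rho\|_1\ge\eps$ and derive a large marginal) and bounds each local term by $\|\rho_{s_i}-\psi_{s_i}\|_1$ using only $\|H_{s_i}\|\le 1$, whereas you additionally use $H_{s_i}\ge 0$ on the Jordan decomposition to gain the factor $\tfrac12$; this yields $\|\psi-\rho\|_1\le\eps/\sqrt{2}$ and hence the strict inequality in item~1 cleanly.
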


\begin{proof} Let the ground energy of $H$ be $\lambda$, then
we have
\begin{align*}
    H\geq \lambda \op{\psi}{\psi}+(\lambda+\Delta)(I-\op{\psi}{\psi})=(\lambda+\Delta)I-\Delta \op{\psi}{\psi}.
\end{align*}
Let us assume that scenario one above does not hold. Then, by the relation between quantum fidelity and trace norm~\cite{Nielsen:2011:QCQ:1972505}, we know that 
\begin{align*}
||\psi-\rho||_1\ge\eps \Longrightarrow \tr (\op{\psi}{\psi}\rho)\leq 1- \frac{||\psi-\rho||_1^2}{2^2}\leq 1-\frac{\eps^2}{4}.
\end{align*}
We can obtain the following bound
\begin{align*}
\tr(H\rho)\geq \tr [((\lambda+\Delta)I-\Delta \op{\psi}{\psi})\rho]=(\lambda+\Delta)-\Delta \tr (\op{\psi}{\psi}\rho).
\end{align*}
Then 
\begin{align*}
\tr[H(\rho-\psi)]\geq (\lambda+\Delta)-\Delta \tr (\op{\psi}{\psi}\rho)-\lambda=\Delta(1-\tr (\op{\psi}{\psi}\rho))\geq \frac{\Delta \eps^2}{4}.
\end{align*}
Therefore,
\begin{align*}
\tr[H(\rho-\psi)]=\sum_{s_i} \tr[H_{s_i}(\rho_{s_i}-\psi_{s_i})]\geq \frac{\Delta \eps^2}{4}.
\end{align*}
This means that there must exist some $s_i$ such that
\begin{align*}
\tr[H_{s_i}(\rho_{s_i}-\psi_{s_i})]\geq \frac{\Delta \eps^2}{4 m}.
\end{align*}
Since $0\leq H_{s_i}\leq I_{s_i}$, we have
\begin{align*}
||\rho_{s_i}-\psi_{s_i}||_1\geq \tr[H_{s_i}(\rho_{s_i}-\psi_{s_i})]\geq \frac{\Delta \eps^2}{4 m}.
\end{align*}
\end{proof}

We note that a pioneering work \cite{Cramer_2010} and its experimental implementation \cite{Lanyon_2017} studied the MPS tomography by focusing on the frustration-free local Hamiltonian and
obtained a similar bound on the unique ground state.  

 A direct application of this result is the following corollary. 
\begin{corollary}
It is sufficient to perform tomography of all the $k$-local reduced density matrices with precision $\frac{\Delta \eps^2}{4 m}$ for trace distance to determine the unique ground state of some $k$-local Hamiltonian up to precision $\eps$ for trace distance. 
\end{corollary}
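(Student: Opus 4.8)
The plan is to obtain this immediately from Lemma~\ref{ground} (= Lemma~\ref{lemma:4.2}) by contraposition; no new analytic idea is needed, and the corollary is essentially a restatement of that lemma as a tomography guarantee. Fix a $k$-local Hamiltonian $H=\sum_i H_{s_i}$ with $0\le H_{s_i}\le I_{s_i}$, gap $\Delta>0$, and interaction graph $G=\{s_1,\dots,s_m\}$, whose unique ground state is $\ket{\psi}$. Running the quantum overlapping tomography protocol developed earlier with target accuracy $\frac{\Delta\eps^2}{4m}$ produces estimates $\wh{\sigma}_{s_i}$ of every $k$-local reduced density matrix obeying $||\wh{\sigma}_{s_i}-\psi_{s_i}||_1\le \frac{\Delta\eps^2}{4m}$ for all $i$; this is precisely what ``tomography of all $k$-local reduced density matrices with precision $\frac{\Delta\eps^2}{4m}$'' delivers.

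The reconstruction step I would use is to output a global state $\rho\in\mathcal{D}$ that best fits the data, i.e.\ one minimizing $\max_i ||\rho_{s_i}-\wh{\sigma}_{s_i}||_1$ over the compact convex set $\mathcal{D}$ of $n$-qubit density matrices; in the Pauli/shadow scheme all $k$-local estimates are already marginals of one (projected) global estimator, so one may simply take $\rho$ with $\rho_{s_i}=\wh{\sigma}_{s_i}$ for every $i$. Then the triangle inequality gives $||\rho_{s_i}-\psi_{s_i}||_1\le \frac{\Delta\eps^2}{4m}$ for all $i$, so condition~2 of Lemma~\ref{ground} is violated; hence condition~1 must hold, and $||\rho-\psi||_1<\eps$. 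Thus $\rho$ pins down $\ket{\psi}$ to trace distance $\eps$, which is exactly the assertion.

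I expect the only point needing care in the write-up to be the bookkeeping in this reconstruction step, not any genuine obstacle: the raw tomographic estimates need not jointly be the marginals of a single $n$-qubit state, so the output must be defined as a best global fit rather than the estimates themselves, and one must decide whether to absorb the resulting factor of two into the stated precision (reading it as $\frac{\Delta\eps^2}{8m}$) or to invoke the fact that the overlapping-tomography estimator already produces jointly consistent marginals, which keeps the clean constant $4m$. All the real work is contained in Lemma~\ref{ground}, which is already established; this corollary merely repackages its contrapositive.
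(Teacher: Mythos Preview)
Your approach is correct and matches the paper's: the paper gives no proof at all, simply labeling the corollary ``a direct application'' of Lemma~\ref{ground}, and you correctly identify that it is just the contrapositive of that lemma phrased as a tomography guarantee. Your additional discussion of the reconstruction step (and the attendant factor-of-two bookkeeping) is a refinement the paper entirely skips; the paper later plugs $\delta=\frac{\Delta\eps^2}{4m}$ into the overlapping-tomography sample bound without addressing whether the local estimates are jointly realizable as marginals of a single state, so your caution there is warranted rather than superfluous.
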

We employ a specific  overlapping tomography protocol in Section 3, which uses $\mathcal{O}\Big(\frac{10^k \log m}{\delta^2}\Big)$ samples for the tomography of $m$ different $k$-qubit reduced density matrices accurate up to a trace distance parameter $\delta$. Here, we only consider a successful probability greater than  a constant greater than $1/2$, says $2/3$.

From this, we can obtain different sample complexities based on different settings of our knowledge about the connectivity of the local Hamiltonian by choosing $\delta=\frac{\Delta \eps^2}{4m}$. 

\begin{enumerate}
    \item If we do not assume any knowledge of the connectivity of the local Hamiltonian, but only being $k$ local and a gap, we only know that $m\le {n\choose k}$ and hence $\mathcal{O}\Big(\frac{10^k {n \choose k}^2 \log{n \choose k}}{\Delta^2\eps^4}\Big)$ samples suffice for the unique ground state tomography of a $k$-local Hamiltonian $H$ with gap $\Delta>0$ and error parameter $\eps$. 
    \item If we know $m$ but not the interaction graph $G=\{s_1,\cdots,s_m\}$, we must perform tomography on every $k$-local reduced density matrices with error 
parameter $\delta$. Therefore, $\mathcal{O}\Big(\frac{10^k m^2 \log{n \choose k}}{\Delta^2\eps^4}\Big)$ samples suffice.
\item If we know the interaction graph $G=\{s_1,\cdots,s_m\}$, then $\mathcal{O}\Big(\frac{10^k m^2 \log m}{\Delta^2\eps^4}\Big)$ samples suffice.

\end{enumerate}
It may also happen that by merging several $k$-local terms, the resultant $(k+a)$-local terms may have a fewer number, yielding some improvement, where $(k+a)$ is the locality of the merged terms.

\section{Learning the output state of a shallow quantum circuit}
In this section, we study the output state of a quantum circuit and its corresponding parent Hamiltonian.

Before presenting the result, we define $\gamma_2(D)$ for the square lattice.
\begin{definition}
 $\gamma_2(D)$ denotes the largest cardinality of a set of points obtained from a single point set $S_0=\{(0,0)\}$ of the square lattice in $D$ steps, where at each step $j\leq D$, we could get an $S_i$ by adding at most one neighbor point, if it is not in $S_{i-1}$, for each $p\in S_{i-1}$. 
\end{definition}
Through a simple counting argument, one can find the sequence of $\gamma_2(D)$ is $$2,4,8,16,30,\cdots$$ and $$\gamma_2(D)\leq (D+1)^2+D^2\approx 2D^2.$$
This counting does not consider that no two gates can act on the same point in the same layer; hence, it overestimates $\gamma_2(D)$. The following observation builds the connection between the circuit output and the ground state of a local Hamiltonian.
\begin{lemma}\label{circuit-to-Hamiltonian}
The output state $|\psi_{{D}}\rangle$ of $n$ qubit quantum circuit with depth $D\geq 1$ is the unique ground state of a $k$-local frustration-free Hamiltonian  $H=\sum_i H_{s_i}$ with $0\leq H_{s_i}\leq I_{s_i}$ with disjoint $s_i$s, $0\leq H_{s_i}=H_{s_i}^2\leq I_{s_i}$ and a gap at least $1$.  Moreover, $k=2^D$, if the gates are not geometrically local; $k=2D$ for a chain; $k=\gamma_2(D)$ for the square lattice. 
\end{lemma}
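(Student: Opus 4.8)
The plan is to run the circuit ``backwards'' and define one projector per qubit. Write the circuit as $U = U_D \cdots U_1$, so $\ket{\psi_D} = U\ket{0}^{\otimes n}$. For each qubit $q \in \{1,\dots,n\}$, consider the single-qubit projector $\op{0}{0}_q$ onto $\ket{0}$ on wire $q$, which annihilates $\ket{0}^{\otimes n}$. Conjugating by the circuit, define $H_{s_q} := U\,(\op{1}{1}_q)\,U^\dagger$. Since $\ket{0}^{\otimes n}$ is the unique state killed by every $\op{1}{1}_q$ (it is the only common $+1$ eigenvector of all the $\op{0}{0}_q$), the state $\ket{\psi_D}$ is the unique common null vector of all the $H_{s_q}$, so $H = \sum_{q} H_{s_q}$ is frustration-free with $\ket{\psi_D}$ as its unique ground state. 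Each $H_{s_q}$ is a genuine projector ($H_{s_q}=H_{s_q}^2$, $0 \le H_{s_q}\le I$), and the gap of $H$ is at least $1$ because any state orthogonal to $\ket{\psi_D}$ fails at least one projector and a sum of projectors has its nonzero eigenvalues bounded below by $1$ — more carefully, conjugating by $U$ reduces this to the statement that $\sum_q \op{1}{1}_q$ has spectrum $\{0,1,\dots,n\}$ with a unique ground state, hence gap exactly $1$.

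The content of the lemma is the \emph{locality} claim: $H_{s_q}$ acts nontrivially only on the set $s_q$ of qubits in the backward lightcone of qubit $q$, and one must bound $|s_q|$. Here I would argue by the standard lightcone/causal-cone argument: $H_{s_q} = U \op{1}{1}_q U^\dagger$, and when we conjugate $\op{1}{1}_q$ by $U^\dagger = U_1^\dagger \cdots U_D^\dagger$ layer by layer, each layer $U_j^\dagger$ can only spread the support of the operator to qubits sharing a two-qubit gate in layer $j$ with a qubit already in the support. Tracking this growth: without geometric constraints, the support at most doubles per layer (each qubit in the current support can pull in at most one new partner), giving $|s_q| \le 2^D$; on a chain, the support is an interval that grows by at most one site on each end per layer, giving $|s_q|\le 2D+1$, and a slightly sharper bookkeeping of which sites can actually be reached (the gate in layer $j$ touching the boundary site was itself constrained) tightens this to $2D$; on the square lattice the support is governed exactly by the growth process defining $\gamma_2(D)$, so $|s_q|\le\gamma_2(D)$. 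For the disjointness claim ``with disjoint $s_i$s'' I would note that one is free to \emph{merge} any overlapping $H_{s_q}$'s into a single term supported on the union — this only enlarges locality within the stated bounds if done carefully, or alternatively one keeps the terms as stated and reads ``disjoint'' as applying after a grouping; in the simplest reading one assigns to each layer/region a disjoint family. (If the intended statement is literally that the $s_i$ can be taken pairwise disjoint, the cleanest route is to peel off the last layer $U_D$: its two-qubit gates act on disjoint pairs, so $H = \sum_{\text{gates } g \text{ in layer } D} U_{<D}\,(\text{projector onto } \ker V_g\text{-image})\,U_{<D}^\dagger$ — but that over-counts the null space; I would instead keep the $n$ single-qubit seeds and simply not claim disjointness beyond what the lightcone structure forces.)

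In what order: (1) define $H_{s_q} = U\op{1}{1}_q U^\dagger$ and verify $H_{s_q}=H_{s_q}^2$, $0\le H_{s_q}\le I$; (2) show frustration-freeness and uniqueness of $\ket{\psi_D}$ as ground state via the bijection $\rho \mapsto U^\dagger \rho U$ with the trivial Hamiltonian $\sum_q \op{1}{1}_q$; (3) deduce the gap $\ge 1$ from the spectrum of $\sum_q \op{1}{1}_q$ being $\{0,1,\dots,n\}$; (4) prove the lightcone support bounds for the three geometries by induction on layers. The main obstacle is step (4) in the square-lattice case: making rigorous that the reachable support is \emph{exactly} the $\gamma_2$-process and that one may use $\gamma_2(D)$ rather than the crude $2D^2$ bound requires carefully specifying the growth rule (at most one new neighbor per existing site per layer, no site counted twice) and matching it to the constraint that distinct two-qubit gates in one layer occupy disjoint pairs of lattice sites; everything else is bookkeeping. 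A secondary subtlety is being precise about what ``disjoint $s_i$'' is claimed — I would state and use only the version actually needed downstream (namely $k$-locality with the stated $k$ and a constant gap), and remark that disjointness can be arranged by merging overlapping terms at the cost of a constant factor in $k$.
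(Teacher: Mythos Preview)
Your core construction---conjugating the single-qubit seed projectors $\op{1}{1}_q$ by the circuit $U$---is exactly the paper's approach, and your arguments for the projector property, frustration-freeness, uniqueness of the ground state, the gap of $1$ (via the spectrum $\{0,1,\dots,n\}$ of $\sum_q\op{1}{1}_q$), and the lightcone bounds in all three geometries are correct and match the paper.

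The gap in your proposal is the handling of the clause ``with disjoint $s_i$s''. You rightly flag it as a subtlety, but your suggested fixes do not work: summing overlapping $H_{s_q}$'s destroys both the projector property and the bound $H_{s_i}\le I$, and iterated merging of overlapping supports can snowball into a single global term rather than costing only ``a constant factor in $k$''. The paper's resolution is a specific construction you are missing: partition $\{1,\dots,n\}$ into classes $S_j$ where $i,l$ lie in the same class iff the naive terms $H_{D,i}$ and $H_{D,l}$ have \emph{identical} support; then replace the seeds over each class by the single projector $\widetilde{H}_{0,j}=I_{S_j}-\op{0\cdots 0}{0\cdots 0}_{S_j}$ \emph{before} conjugating by $U$. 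Because all qubits in $S_j$ share the same lightcone, the conjugated term $\widetilde{H}_{D,j}=U\,\widetilde{H}_{0,j}\,U^\dagger$ has the \emph{same} support as any $H_{D,i}$ with $i\in S_j$, so locality does not increase; the seeds $\widetilde{H}_{0,j}$ act on disjoint blocks $S_j$, so $\widetilde{H}_0$ still has gap exactly $1$; and each $\widetilde{H}_{D,j}$ is a genuine projector. This yields terms with pairwise \emph{distinct} supports (and at most $n$ of them). Note that even the paper's construction delivers only distinct, not literally pairwise disjoint, supports---so your skepticism about the word ``disjoint'' is justified; what is actually used downstream is that the number of terms is at most $n$.
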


\begin{proof}
Let us denote the circuit as\,  $\mathcal{U}\equiv U_{D} \cdots U_1$ and its action on the initial $n$-qubit product state $\ket{0}^{\otimes n}$ gives $ |\psi_{{D}}\rangle=\mathcal{U}\ket{0}^{\otimes n}$, with $U_\alpha$ being the $\alpha$-th layer of unitaries that are composed of non-overlapping two-qubit gates. 

The initial state $\ket{0}^{\otimes n}$ is the unique ground state of 
\begin{align*}
H_0=\sum_i H_{0,i},
\end{align*}
with $H_{0,i}=\op{1}{1}_i$, where the subscript $i$ indicates the qubit site. 
The frustration-free Hamiltonian is $1$-local, including $n$ terms, and has a gap of $1$.

We define the Hamiltonian as follows:
\begin{align*}
 {H_{{D}}}=\mathcal{U} H_0 \mathcal{U}^{\dag}=U_{D} \cdots U_1 H_0 U_{1}^{\dag} \cdots U_D^{\dag}=\sum_i U_{D} \cdots U_1 H_{0,i} U_{1}^{\dag} \cdots U_D^{\dag}.
\end{align*}
$ {H_{{D}}}$ obviously shares the same spectrum as $H_0$, but its locality will be larger. Therefore, $ {H_{{D}}}$ is a frustration-free Hamiltonian with the spectral gap of 1 and the ground state $ |\psi_{{D}}\rangle$.

We observe that each term $H_{D,i}:=U_{D} \cdots U_1 H_{0,i} U_{1}^{\dag} \cdots U_D^{\dag}\geq 0$ is nontrivial on at most $k$ qubits, where
\begin{enumerate}
    \item $k\leq 2^D$, if the gates are not geometrically local;
    \item $k\leq \gamma_2(D)$ on the square  lattice;
    \item $k\leq 2D$ on a 1-dimensional chain.
\end{enumerate}
The above bound on $k$ can be seen from a light cone argument.

If different local terms are nontrivial on a different set of qubits, $ {H_{{D}}}$ is already the local Hamiltonian
with the spectral gap of 1 and the ground state $ |\psi_{{D}}\rangle$. 
Otherwise, there exist some local terms in $\sum_{j=1}^n H_{D,j}:=U_{D} \cdots U_1 H_{0,j} U_{1}^{\dag} \cdots U_D^{\dag}\geq 0$ which are nontrivial on the same set of qubits. In other words, the interaction graph (i.e., generally a hypergraph) of $H_D$ may contain less than $n$ elements. We divide $\{1,\cdots,n\}=\cup_{j} S_j$ such that $i,l\in S_j$ iff $H_{D,i}$ and $H_{D,l}$ are nontrivial on the same set of qubits, and
 define the revised initial Hamiltonian as follows,
\begin{align*}
\widetilde{H}_0=\sum_j \widetilde{H}_{0,j},
\end{align*}
where $\widetilde{H}_{0,j}=I_{S_j}-\op{0\cdots 0}{0\cdots 0}_{S_j}$   with $0$ on qubits in $S_j$.

Further, we let
\begin{align*}
\widetilde {H}_{{D}}=\mathcal{U} \widetilde{H}_0 \mathcal{U}^{\dag}=U_{D} \cdots U_1 \widetilde{H}_0 U_{1}^{\dag} \cdots U_D^{\dag}=\sum_j U_{D} \cdots U_1 \widetilde{H}_{0,i} U_{1}^{\dag} \cdots U_D^{\dag}.
\end{align*}
$\widetilde {H}_{{D}}$ shares the same spectrum as $\widetilde{H}_0$. Therefore, $\widetilde {H}_{{D}}$ is a frustration-free Hamiltonian with a spectral gap of $1$ and the ground state $ |\psi_{{D}}\rangle$ defined above. 

One can verify that different local terms $\widetilde{H}_{D,j}:=U_{D} \cdots U_1 \widetilde{H}_{0,j} U_{1}^{\dag} \cdots U_D^{\dag}\geq 0$ are nontrivial on a different set of qubits.  Furthermore, for $i\in S_j$, $\widetilde{H}_{D,j}$ and ${H_{D,i}}$ apply nontrivially on the same set of qubits. In other words, the locality $k$ of $\widetilde{H}_{D}$ is that same as that of  ${H_{{D}}}$, and specifically, we have
\begin{enumerate}
    \item $k\leq 2^D$, if the gates are not geometrically local;
    \item $k\leq 2D$ on a 1-dimensional chain;
        \item $k\leq \gamma_2(D)$ on the square lattice.
\end{enumerate}
\end{proof}

According to Lemma \ref{ground} by setting $\Delta=1$, we have\begin{theorem}\label{output}
$\ket{\psi}$ has circuit complexity at most $D$. For any state $\rho$, one of the following conditions must be satisfied:
\begin{enumerate}
\item $||\psi-\rho||_1<\eps$;
\item  $||\psi_s-\rho_s||_1>\frac{ \eps^2}{4 n}$ for some $s\subseteq \{0,\cdots, n-1\}$ with $|s|=k$,
\end{enumerate}
where $\psi=\op{\psi}{\psi}$ and we list several scenarios of $k$: 
\begin{enumerate}
    \item $k= 2^D$, if the gates are not geometrically local;
        \item $k= 2D$ on a 1-dimensional chain;
    \item $k= \gamma_2(D)$ on the square lattice.
\end{enumerate}
\end{theorem}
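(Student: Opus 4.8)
The plan is to obtain Theorem~\ref{output} as an immediate corollary of the two preceding results, with essentially all the work already done and only the bookkeeping of constants left. First I would apply Lemma~\ref{circuit-to-Hamiltonian} to the depth-$D$ output state $\ket{\psi}$: it is the unique ground state of a frustration-free Hamiltonian $H=\sum_{i=1}^m H_{s_i}$ with $0\leq H_{s_i}\leq I_{s_i}$, spectral gap at least $1$, and pairwise disjoint supports $s_i$. Disjointness is the point that keeps the argument clean: since $\bigsqcup_i s_i\subseteq\{0,\dots,n-1\}$ and each $|s_i|\geq 1$, we get $m\leq n$. The locality $k$ is exactly the one supplied by that lemma in each geometry: $k=2^D$ with no geometric constraint, $k=2D$ on the chain, and $k=\gamma_2(D)$ on the square lattice.

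Next I would feed this Hamiltonian into Lemma~\ref{ground} with the gap taken to be $\Delta=1$; a larger gap only strengthens that lemma's conclusion, since $\frac{\Delta\eps^2}{4m}$ is increasing in $\Delta$, so the $\Delta=1$ dichotomy is implied in general. Lemma~\ref{ground} then states that for every state $\rho$, either $||\psi-\rho||_1<\eps$, or $||\psi_{s_i}-\rho_{s_i}||_1>\frac{\eps^2}{4m}$ for some $i$. Using $m\leq n$ upgrades the second alternative to $||\psi_{s_i}-\rho_{s_i}||_1>\frac{\eps^2}{4n}$, which is condition~2 of the theorem.

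The last point is purely cosmetic: Lemma~\ref{circuit-to-Hamiltonian} only guarantees $|s_i|\leq k$, whereas the theorem wants a witness $s$ with $|s|=k$. Since $k\leq n$ in all the stated regimes, I would pad $s_i$ to an arbitrary superset $s\subseteq\{0,\dots,n-1\}$ with $|s|=k$; monotonicity of the trace distance under the partial trace gives $||\psi_s-\rho_s||_1\geq||\psi_{s_i}-\rho_{s_i}||_1>\frac{\eps^2}{4n}$, so the enlarged set is still a valid witness. I do not expect a genuine obstacle here — the substance lives in Lemmas~\ref{circuit-to-Hamiltonian} and~\ref{ground} — and the one thing worth stating carefully is that the support-disjointness in Lemma~\ref{circuit-to-Hamiltonian} is exactly what licenses replacing the $m$ in the denominator by $n$.
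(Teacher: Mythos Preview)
Your proposal is correct and matches the paper's approach exactly: the paper's proof is the single line ``According to Lemma~\ref{ground} by setting $\Delta=1$, we have\ldots'', which is precisely the combination of Lemma~\ref{circuit-to-Hamiltonian} and Lemma~\ref{lemma:4.2} you describe. You are in fact more explicit than the paper about why $m\leq n$ (disjointness of the $s_i$) and about the padding step from $|s_i|\leq k$ to $|s|=k$ via monotonicity of the trace norm under partial trace, both of which the paper leaves implicit.
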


\subsection{Tomography}
According to Theorem \ref{output}, we can accomplish the tomography of the quantum circuit outcome as long as we know that the circuit is of depth at most $D$.
\begin{theorem}
To accomplish the quantum state tomography for depth-$D$ circuit output with precision $\eps$,
\begin{enumerate}
\item $\mathcal{O}\Big(\frac{n^2\cdot10^{k}\log{n \choose k}}{\eps^4}\Big)$ suffice, if we do not know the circuit structure;

\item $\mathcal{O}\Big(\frac{n^2\cdot 10^{k} \log{n}}{\eps^4}\Big)$ suffice, if we know the circuit structure.
\end{enumerate}
Note that, same as above, $k= 2^D$, if the gates are not geometrically local; $k= \gamma_2(D)$ on the square lattice; $k= 2D$ on a 1-dimensional chain.
\end{theorem}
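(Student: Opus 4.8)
The plan is to combine Theorem~\ref{output} with the overlapping tomography guarantee established in Section~3 (Algorithm~1 and its analysis), essentially following the recipe already used in the Corollary after Lemma~\ref{ground}, but now instantiated for the circuit setting where $\Delta = 1$. First I would recall that Theorem~\ref{output} reduces the tomography of a depth-$D$ circuit output $\ket{\psi}$ to the task of learning all the relevant $k$-local reduced density matrices to trace-distance precision $\delta := \frac{\eps^2}{4n}$: if $\rho$ agrees with $\psi$ on every such marginal to within $\delta$, then $\|\psi - \rho\|_1 < \eps$. So it suffices to run an overlapping tomography procedure that outputs estimates $\wh{\rho}_s$ with $\|\wh{\rho}_s - \psi_s\|_1 \le \delta$ for all relevant $s$ simultaneously, with constant success probability (say $2/3$), and then let $\rho$ be any global state consistent with these estimates (existence is guaranteed since $\psi$ itself is such a state up to the sampling error; more carefully, one solves the feasibility SDP, or simply invokes that the true $\psi$ witnesses near-feasibility and the triangle-inequality slack is absorbed into constants).

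Next I would split into the two cases. \textbf{Case 2 (known circuit structure):} the interaction graph $G = \{s_1, \dots, s_m\}$ of the parent Hamiltonian $H_D$ from Lemma~\ref{circuit-to-Hamiltonian} is known, and $m \le n$ (the disjoint/merged terms number at most $n$). So we need overlapping tomography of $m \le n$ specified $k$-qubit marginals to precision $\delta$ with constant confidence; by the Section~3 bound this costs $\mathcal{O}\!\big(\frac{10^k \log m}{\delta^2}\big) = \mathcal{O}\!\big(\frac{10^k \log n}{\delta^2}\big)$ samples. Substituting $\delta = \frac{\eps^2}{4n}$ gives $\delta^{-2} = \Theta(n^2/\eps^4)$, hence $\mathcal{O}\!\big(\frac{n^2 \cdot 10^k \log n}{\eps^4}\big)$, as claimed. \textbf{Case 1 (unknown structure):} we do not know which $k$-subsets support the terms of $H_D$, so we must learn \emph{all} ${n \choose k}$ of the $k$-qubit marginals. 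Algorithm~1 (Pauli overlapping tomography) already does exactly this with a union bound over the ${n \choose k}$ subsets, costing $\mathcal{O}\!\big(\frac{10^k \log {n \choose k}}{\delta^2}\big)$ samples; substituting $\delta = \frac{\eps^2}{4n}$ yields $\mathcal{O}\!\big(\frac{n^2 \cdot 10^k \log {n \choose k}}{\eps^4}\big)$. In both cases $k$ is the locality from Lemma~\ref{circuit-to-Hamiltonian}: $2^D$ in general, $2D$ on a chain, $\gamma_2(D)$ on the square lattice.

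The one genuine subtlety — and the step I would be most careful about — is the passage from ``approximate marginals'' back to ``approximate global state.'' Theorem~\ref{output} is stated for a state $\rho$ whose marginals exactly match or differ by more than the threshold; to use it with the \emph{estimated} marginals $\wh\rho_s$ (which are not themselves valid marginals of any single global state, and which only approximate $\psi_s$) I would argue as follows: the true state $\psi$ satisfies $\|\psi_s - \wh\rho_s\|_1 \le \delta$ for all $s$ (on the success event), so the feasibility region of global states whose marginals are within $\delta$ of the $\wh\rho_s$ is nonempty; picking any $\rho$ in it and applying Theorem~\ref{output} with threshold $\frac{\eps^2}{4n} = \delta$ — after adjusting the target marginal precision to, say, $\frac{\eps^2}{8n}$ so that the triangle inequality $\|\psi_s - \rho_s\|_1 \le \|\psi_s - \wh\rho_s\|_1 + \|\wh\rho_s - \rho_s\|_1 \le 2 \cdot \frac{\eps^2}{8n}$ still beats the threshold — forces $\|\psi - \rho\|_1 < \eps$. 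This only changes the constant hidden in the $\mathcal{O}(\cdot)$, so the stated bounds stand. Everything else is bookkeeping: combining the confidence parameters (the overlapping tomography is run at constant failure probability, which is all that is needed), and noting $\log m \le \log n$ and $m \le {n \choose k}$ in the respective cases.
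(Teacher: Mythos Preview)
Your proposal is correct and follows essentially the same approach as the paper: invoke Theorem~\ref{output} (i.e., Lemma~\ref{ground} with $\Delta=1$) to reduce to learning the relevant $k$-local marginals to precision $\tfrac{\eps^2}{4n}$, then plug in the overlapping-tomography sample bound $\mathcal{O}(10^k \log M/\delta^2)$ with $M={n\choose k}$ or $M\le n$ according to whether the circuit structure is known. Your careful treatment of the ``approximate marginals $\to$ approximate global state'' step (via feasibility and a factor-2 slack in $\delta$) is more explicit than the paper's terse proof, but the argument is the same.
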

\begin{proof}
If we do not know the circuit structure, it is sufficient to do tomography on all the $k$-local reduced density matrices with precision $\frac{\eps^2}{4 n}$.

If we know the circuit structure, we can compute the $k$-local Hamiltonian, which has at most $n$ terms and at least $1$ spectral gap. 
It is sufficient to perform the tomography of these corresponding $n$ terms of the $k$-local reduced density matrices with precision $\frac{\eps^2}{4 n}$.
\end{proof}

The sample complexity is $poly(n,\frac{1}{\eps})$ for circuits with depth $D$ and, similarly, we list several scenarios:
\begin{enumerate}
\item $D\leq \log\log n+O(1)$ when gates are not geometrically local;
\item $D\leq O(\sqrt{\log n})$ on the square lattice;
\item $D\leq O({\log n})$ on a 1-dimensional chain.
\end{enumerate}

\subsection{Testing the circuit complexity of states}
Using the above results, one can test the circuit complexity of an unknown state. This is manifested in the following theorem.
\begin{theorem}\label{complexity}
For an unknown quantum state $\rho$, and a given $D$, 
 $\mathcal{O}\Big(\frac{n^2\cdot10^{k}\log{n \choose k}}{\eps^4}\Big)$ samples suffice to distinguish between the two cases:
 \begin{enumerate}
 \item $||\psi-\rho||<\frac{\eps^2}{12n}$ for some quantum state $\ket{\psi}$ with circuit complexity $\leq D$;
 \item $||\psi-\rho||>\eps$ for any quantum state $\ket{\psi}$ with circuit complexity $\leq D$;
 \end{enumerate}
$k= 2^D$ in the above sample complexity if the gates are not geometrically local; $k= \gamma_2(D)$ on the square lattice; $k= 2D$ on a 1-dimensional chain.
\end{theorem}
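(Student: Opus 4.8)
The plan is to run the overlapping tomography protocol once and then use Theorem \ref{output} as a decision oracle. First I would fix the circuit structure-agnostic setting: since $\ket{\psi}$ has circuit complexity $\le D$, Lemma \ref{circuit-to-Hamiltonian} tells us it is the unique ground state of a $k$-local frustration-free Hamiltonian with gap $1$ and disjoint supports (with $k=2^D$, or $k=2D$ on a chain, or $k=\gamma_2(D)$ on the square lattice), and there are at most $n$ such local terms, each supported on a set $s$ of size $\le k$. So the relevant reduced density matrices are the $\rho_s$ for $s$ ranging over all $k$-subsets of $\{0,\dots,n-1\}$ — there are at most $\binom{n}{k}$ of them. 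I would invoke the Pauli overlapping tomography protocol (Section 3, Algorithm 1 / Theorem 1) with target precision $\delta_0 = \frac{\eps^2}{12n}$ applied to \emph{all} $\binom{n}{k}$ of these $k$-qubit marginals; by the stated bound this costs $\mathcal{O}\!\big(\frac{10^k\binom{n}{k}^2\log\binom{n}{k}}{\eps^4}\big)$... wait, I need to be careful: the theorem claims $\mathcal{O}\big(\frac{n^2\cdot 10^k\log\binom nk}{\eps^4}\big)$, which matches the "known circuit structure" count of $n$ marginals rather than $\binom nk$. So actually I would only need to estimate the $n$ marginals $\psi_{s_i}$ corresponding to the at-most-$n$ disjoint supports of the frustration-free parent Hamiltonian — but since $\rho$ is unknown we do not know those supports. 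The resolution: estimate all $\binom nk$ marginals of $\rho$ to precision $\delta_0$ using $\mathcal{O}\big(\frac{10^k\log(\binom nk/\delta)}{\delta_0^2}\big)$ samples per marginal via the log-in-the-union-bound trick of Algorithm 1, giving total $\mathcal{O}\big(\frac{n^2\cdot 10^k \log\binom nk}{\eps^4}\big)$ after substituting $\delta_0=\frac{\eps^2}{12n}$ and absorbing the constant-confidence requirement; obtain classical estimates $\hat\rho_s$.

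Next comes the decision rule. The key point is that the set of all $k$-subset marginals of the unknown $\ket{\psi}$ — call it the \emph{fingerprint} — is something we can search over. Concretely: the algorithm accepts iff there exists a pure state $\ket{\psi}$ of circuit complexity $\le D$ all of whose $k$-local marginals $\psi_s$ satisfy $\|\psi_s - \hat\rho_s\|_1 \le 2\delta_0$ for every $k$-subset $s$. (This is a decidable — though expensive — search; the theorem is about sample complexity, not time.) I would then argue soundness and completeness using Theorem \ref{output} with $n$ in the denominator. Completeness (Case 1): if $\|\psi-\rho\|_1 < \delta_0$ for some complexity-$\le D$ state $\ket{\psi}$, then each marginal satisfies $\|\psi_s-\rho_s\|_1 \le \|\psi-\rho\|_1 < \delta_0$ (trace distance is monotone under partial trace), and since $\|\hat\rho_s - \rho_s\|_1 \le \delta_0$ we get $\|\psi_s - \hat\rho_s\|_1 \le 2\delta_0$, so the witness exists and we accept. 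Soundness (Case 2): suppose we accept, witnessed by some complexity-$\le D$ state $\ket{\psi}$ with $\|\psi_s-\hat\rho_s\|_1 \le 2\delta_0$ for all $s$. Then $\|\psi_s-\rho_s\|_1 \le 3\delta_0 = \frac{\eps^2}{4n}$ for every $k$-subset $s$. Now apply Theorem \ref{output} to this $\ket{\psi}$ and the state $\rho$: since condition (2) fails for all $s$ with $|s|=k$, condition (1) must hold, i.e. $\|\psi-\rho\|_1 < \eps$, contradicting Case 2. Hence in Case 2 we reject. The two cases are thus distinguished.

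The main obstacle — and the place where I expect to have to be most careful — is \emph{bookkeeping the constants and the $\delta_0$ threshold} so that the triangle-inequality chain $\delta_0 + 2\delta_0 = 3\delta_0 = \frac{\eps^2}{4n}$ lines up exactly with the hypothesis of Theorem \ref{output}; this is precisely why the "$\frac{\eps^2}{12n}$" appears in Case 1 of the statement. A secondary subtlety is making sure the overlapping-tomography sample count really is $\mathcal{O}\big(\frac{n^2 10^k\log\binom nk}{\eps^4}\big)$ and not $\binom nk^2$-type: this works because Algorithm 1 estimates \emph{all} $k$-subset marginals simultaneously from a \emph{single} batch of random-Pauli measurements, paying only a $\log\binom nk$ union-bound factor in the sample count (not a $\binom nk$ factor), and then $\delta_0 = \Theta(\eps^2/n)$ contributes the $n^2/\eps^4$. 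One should also note we never need to know the circuit structure of $\rho$ or even whether $\rho$ is pure — Theorem \ref{output} only constrains the hypothesized low-complexity $\ket{\psi}$, and the decision procedure quantifies over all such $\ket{\psi}$. Finally, the identical argument goes through verbatim with $k=2D$ on the chain and $k=\gamma_2(D)$ on the square lattice, since Lemma \ref{circuit-to-Hamiltonian} and Theorem \ref{output} already package those cases.
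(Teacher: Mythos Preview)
Your proposal is correct and essentially identical to the paper's own proof. The paper's Algorithm~\ref{alg:cap} runs overlapping tomography at precision $\tfrac{\eps^2}{12n}$, accepts iff some complexity-$\le D$ state $\ket{\psi}$ has $\|\psi_s-\tilde\rho_s\|_1<\tfrac{\eps^2}{6n}$ for all $|s|=k$, and then uses exactly your triangle-inequality chain $\tfrac{\eps^2}{12n}+\tfrac{\eps^2}{6n}=\tfrac{\eps^2}{4n}$ together with Theorem~\ref{output} for soundness; your $\delta_0$, $2\delta_0$, $3\delta_0$ are precisely these three constants, and your resolution of the $n^2$ versus $\binom{n}{k}^2$ issue (simultaneous estimation with a $\log\binom{n}{k}$ union-bound factor, the $n^2$ coming from squaring $\delta_0^{-1}$) is the correct reading of Algorithm~1.
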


\begin{proof}
We prove the correctness of Algorithm \ref{alg:cap}:

\begin{algorithm}
\KwIn{$\rho$ and $D$\;}
\KwOut{``Yes'' if $||\psi-\rho||<\frac{\eps^2}{12n}$ for some quantum state $\ket{\psi}$ with circuit complexity $\leq D$\;
\ \ \ \ \ \ \ \ \ \ \ \ \ \ \ \ \ ``No'' if $||\psi-\rho||>\eps$ for any quantum state $\ket{\psi}$ with complexity $\leq D$;}

Do the overlapping tomography up to precision $\frac{\eps^2}{12n}$ and obtain $\tilde{\rho_{s}}$ for each $|s|=k$\;

Compute a quantum state $\ket{\psi}$ with circuit complexity $\leq D$ such that $||\psi_s-\tilde{\rho_{s}}||_1<\frac{\eps^2}{6n}$ for every $|s|=k$ \; 

\If {such $\ket{\psi}$ does not exist}
{
Return ``No''\;
}
\Else
{Return ``Yes''.}
\caption{Testing circuit complexity}\label{alg:cap}
\end{algorithm}

If $||\psi-\rho||<\frac{\eps^2}{12n}$ for some quantum state $\ket{\psi}$ with circuit complexity $\leq D$, then
\begin{align*}
||\psi_s-\tilde{\rho_{s}}||\leq ||\psi_s-\rho_s||+||\tilde{\rho_{s}}-\rho_s||<||\psi-\rho||+||\tilde{\rho_{s}}-\rho_s||<\frac{\eps^2}{6n}.
\end{align*}
Step 2  shall find a $\ket{\psi}$, and the algorithm will return ``Yes''.

If $||\psi-\rho||>\eps$ for any quantum state $\ket{\psi}$ with circuit complexity $\leq D$, we need to show that the algorithm shall not find a $\ket{\phi}$ at Step 5. Otherwise, for each $|s|=k$, we have
\begin{align*}
||\phi_s-\rho_{s}||\leq ||\phi_s-\tilde{\rho_s}||+||\tilde{\rho_{s}}-\rho_s||<\frac{\eps^2}{12n}+\frac{\eps^2}{6n}=\frac{\eps^2}{4n}.
\end{align*}
Theorem \ref{output} implies 
\begin{align*}
||\rho-\phi||<\eps,
\end{align*}
which leads to a contradiction!

\end{proof}
It is not hard to see a lower bound of $\Omega(\frac{n}{\eps^2})$ samples is needed, e.g., by considering  depth-1 quantum circuits consisting of one-qubit unitaries and by studying the topography of a tensor product state. The intriguing question is whether the $\frac{1}{\eps^4}$ is necessary using local measurements.

We note that the classical computation in Step 2 is not necessarily easy. The decision version of this problem is a variant of the quantum marginal problem \cite{Klyachko_2006,liu2007consistency}, which focuses on the search for low complexity states.  

It would be interesting to know this problem's precise complexity class. 

\section{Lower bound of the quantum state complexity}
 In this section, we will continue to study the circuit complexity of quantum states but  the perspective of lower bound. 
\begin{definition}
For a quantum state $\ket{\psi}$, its circuit complexity is defined as the minimum depth of quantum circuit $C$ such that
\begin{align*}
\ket{\psi}=C\ket{0}^{\otimes n}.
\end{align*}
\end{definition}

Fact \ref{Hamiltonian} and Lemma \ref{circuit-to-Hamiltonian} imply the following.
\begin{theorem}\label{bound}
If $\ket{\psi}$ is not UDA by its $r$ local reduced density matrices, its circuit complexity is at least: $\log (r+1)$ for non-geometrical circuits, $\lceil \frac{r+1}{2} \rceil $ on $1$-D chain, and $\max\limits_{D:\gamma_2(D)\leq r+1}\lceil D \rceil $ on the square lattice.
\end{theorem}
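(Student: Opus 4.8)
The plan is to prove the contrapositive: if $\ket{\psi}$ has circuit complexity $D$, then it is UDA by its $k$-local reduced density matrices for the locality $k=k(D)$ produced by Lemma~\ref{circuit-to-Hamiltonian}, and from this the three lower bounds follow by solving $k(D)\ge r+1$ for $D$. Both ingredients are already in hand. Lemma~\ref{circuit-to-Hamiltonian} tells us that the output $\ket{\psi}=C\ket{0}^{\otimes n}$ of a depth-$D$ circuit is the unique ground state of a $k$-local frustration-free Hamiltonian $H=\sum_i H_{s_i}$ whose interaction graph $G=\{s_i\}$ consists of sets of size at most $k$, with $k=2^D$ for general circuits, $k=2D$ on the chain, and $k=\gamma_2(D)$ on the square lattice. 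Fact~\ref{Hamiltonian}(iii) then says that a state which is the unique ground state of a Hamiltonian with interaction graph $G$ is UDA by its reduced density matrices on $G$; since every $s_i$ has size at most $k$, this in particular makes $\ket{\psi}$ UDA by its $k$-local reduced density matrices.

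Next I would invoke the elementary monotonicity of the UDA property: if a state is uniquely determined among all mixed states by its reduced density matrices on subsets of size at most $k$, then it is also uniquely determined by the larger family of marginals on subsets of size at most $k'$ for any $k'\ge k$, simply because imposing more marginal constraints can only shrink the compatible set. Equivalently, ``not UDA by its $r$-local reduced density matrices'' implies ``not UDA by its $k$-local reduced density matrices'' for every $k\le r$. Combining this with the previous paragraph, if $\ket{\psi}$ is not UDA by its $r$-local reduced density matrices then its circuit complexity $D$ must satisfy $k(D)>r$, i.e.\ $k(D)\ge r+1$.

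It remains to unwind $k(D)\ge r+1$ in each geometry. For general circuits $2^D\ge r+1$, hence $D\ge\log(r+1)$. On the one-dimensional chain $2D\ge r+1$, and since $D$ is an integer $D\ge\lceil(r+1)/2\rceil$. On the square lattice $\gamma_2(D)\ge r+1$; because the integer sequence $\gamma_2$ is strictly increasing, any $D$ with $\gamma_2(D)\ge r+1$ is at least $\max\{D':\gamma_2(D')\le r+1\}$ — if $D$ were strictly smaller, monotonicity of $\gamma_2$ would force $\gamma_2(D)<r+1$, contradicting $\gamma_2(D)\ge r+1$, with the edge case $\gamma_2(D)=r+1$ being exactly where strict monotonicity is used. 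This yields the stated bound $\max_{D:\gamma_2(D)\le r+1}\lceil D\rceil$, the ceiling being cosmetic since $D$ is already an integer.

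The mathematical content is entirely carried by the two cited results, so the only thing to be careful about is bookkeeping of what ``$k$-local'' refers to: the relevant $k$ is the size of the sets in the interaction graph coming out of Lemma~\ref{circuit-to-Hamiltonian} (the post-merging locality of the parent Hamiltonian, not the number of its terms), and one must make sure the hypothesis ``not UDA by its $r$-local reduced density matrices'' is matched to that quantity and to the light-cone values $2^D$, $2D$, $\gamma_2(D)$. I expect no genuine obstacle beyond this matching and the boundary case $\gamma_2(D)=r+1$ on the lattice.
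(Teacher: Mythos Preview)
Your proposal is correct and follows essentially the same route as the paper: take the contrapositive, apply Lemma~\ref{circuit-to-Hamiltonian} to get a $k$-local parent Hamiltonian with unique ground state $\ket{\psi}$, invoke Fact~\ref{Hamiltonian}(iii) to conclude UDA by the $k$-local marginals, and then read off $k>r$ in each geometry. The paper's own argument is the same two-line reduction, only terser; your version spells out the monotonicity of UDA in the locality parameter and the inversion of $k(D)\ge r+1$ in each case, which the paper leaves implicit.
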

\begin{proof}
Suppose $\ket{\psi}$ is the output state of a depth-$D$ circuit. According to Lemma \ref{circuit-to-Hamiltonian}, $\ket{\psi}$ is the unique ground state of a $k$-local Hamiltonian, where $k=2^D$ for non-geometrical circuits, $k=\gamma_2(D)$ for the square lattice and $2D$ for $1$-D chain. Fact \ref{Hamiltonian} implies that it is UDA by its $k$-local reduced density matrices. That means
\begin{align*}
k>r.
\end{align*}
This proves our statement.
\end{proof}

It is worth mentioning that the bounds are tight according to the following GHZ example.

\subsection{Examples: GHZ state, long-range entangled and short-range  states}
Take the GHZ state $\ket{\psi}=\frac{1}{\sqrt{2}}(\ket{0}^{\otimes n}+\ket{1}^{\otimes n})$ as an example, it is not UDA by its $n-1$ reduced density matrices. To see this, we observe its $n-1$ reduced density matrices are all 
\begin{align*}
\frac{1}{2}(\otimes_{i=1}^{n-1}\op{0}{0}+\otimes_{i=1}^{n-1}\op{1}{1}).
\end{align*}
The following state has the same $n-1$ reduced density matrices as $\ket{\psi}$
\begin{align*}
\frac{1}{2}(\otimes_{i=1}^{n}\op{0}{0}+\otimes_{i=1}^{n}\op{1}{1}).
\end{align*}
Theorem \ref{bound} implies that the circuit complexity of GHZ is at least
$\lceil \log n\rceil$ for non-geometrical circuits, $\lceil \frac{n}{2} \rceil $ on $1$-D chain, and $\max\limits_{D: \gamma_2(D)\leq n}\lceil D \rceil $ on the square lattice.

All these bounds are tight. The upper bound for non-geometrical circuits follows from the following arguments: In the first layer, a single two-qubit gate can generate a Bell state $\frac{1}{\sqrt{2}}\ket{00}+\ket{11}$; In the second layer, we can use each qubit of the Bell state as a control to generate a $4$-qubit GHZ state; and we continue in this fashion to grow the system size.  In the $\lceil \log n \rceil$-th layer, we can use all the previous qubits of the GHZ state to generate $L$-qubit GHZ state for any $L\leq 2^{\lceil \log n \rceil}$.

We obtain the upper bounds of $\lceil \frac{n}{2} \rceil $ on $1$-D chain and $\max\limits_{D: \gamma_2(D)\leq n}\lceil D \rceil $ on the square lattice similarly. One only needs to have the light cone argument to see that the maximally possible size of a light cone is $2D$ on a $1$-D chain and 
$\gamma_2(D)$ on the square lattice, respectively.

On the other hand, if we allow doing quantum measurement during the quantum circuit, we can generate GHZ state more efficiently: 
In~\cite{verresen2021efficiently}, the authors provide a protocol to generate an $n$-qubit GHZ state. In the first step, they use a depth-2 circuit to create a graph state on a circle of $2n$. In the second step, they perform single-qubit measurements (in the X basis) on half of them. The last action corrects the potential phase change according to the measurement outcomes.

The GHZ state has long-range entanglement. There are other long-range entangled states~\cite{chen2010local} that are topologically ordered, such as the toric code~\cite{kitaev2003fault} or string-nets~\cite{levin2005string}. It is known that they cannot be created from a product state by a finite-depth quantum circuit with geometrically local gates~\cite{chen2010local} but can be created with a linear depth in the system size. However, this does not lead to any specific useful lower bound on the locality of reduced density matrices that enable their unique determination. It is also known that circuits can create them with $\mathcal{O}(\log(N))$-depth, but with long-range gates, in a reverse real-space renormalization procedure~\cite{aguado2008entanglement,konig2009exact}, where $N$ is the total number of qudits. For example,  Ref.~\cite{aguado2008entanglement} considers disentangling the toric code state and gives a scheme in which an operation of  7 (non-local) CNOT depths can reduce the system size by a factor of 4. This gives that the number of depth $D$ to disentangle all qubits is $D=(7/2)\log_2 N$, leading $k\ge 2^D=N^{(7/2)}$, not a useful lower bound. One would expect reduced density matrices to be proportional to the system size (more precisely, of logical code distance)  to determine which degenerate ground state is produced uniquely. 
Although from the circuit complexity perspective, one cannot directly lower bound the range of the reduced density matrices so that a long-range entangled state is UDA, it may be more appropriate from the perspective of topological code structure and its distance~\cite{kitaev2003fault}. In particular, we expect that the code distance provides a lower bound on $k$. 

On the other hand, short-range entangled states can be created from geometrically local gates with a constant depth $D$~\cite{chen2010local} without any symmetry constraint. (Even with symmetry, geometrically non-local gates can disentangle symmetry-protected topologically ordered states with finite-depth circuits~\cite{stephen2022non-local}.) In the worst-case scenario, these states can be uniquely determined by their local reduced density matrices with locality $k\ge 2^D$, assuming 2-local  gates (note geometrically local gates will yield smaller locality $k$). 
Such a property of UDA for short-range entangled states extends to the entire gapped phase (except at the phase boundary), as a finite depth of local gates can connect any  two points inside the same phase. Thus, UDA can be a useful property in short-range gapped phases.

\section{Conclusion and Discussion}
This paper shows that the sample complexity of tomography is low for quantum states with low circuit complexity because learning marginals suffices for state tomography even without knowledge of the circuit structure. Our result aligns with the intuition that the lower the complexity of the quantum state, the fewer samples are needed for learning.
The previous exponential lower bound seems to originate from considering the general quantum states with exponential circuit complexity~\cite{holevo-book,Hayashi_1998}. Our findings thus pave the way for studying the relationship between sample complexity of learning and circuit complexity.  
One exciting question is to complete the picture in the intermediate regime, i.e., to determine the sample complexity of quantum state tomography for quantum states with \textit{polynomial} circuit complexity, using Pauli measurements. 

From a software perspective, data structures are essential for programmers, which enable efficient data storage and retrieval, algorithm design, resource management, and performance optimization.
Our results provide a promising choice as a data structure for quantum computing. The quantum state vector is the primary data structure used in quantum computing, which exhibits the exponential wall in the cost of classical description. Our work shows that the tuple of reduced density matrices is a potential candidate since it is economical and precise in many essential scenarios, including for shallow circuits most relevant to NISQ devices. Additionally, our lower bound describes the bottleneck of quantum gate synthesis and quantum circuit optimization, which will be beneficial to understanding the performance and feasibility of quantum algorithms in comprehensive quantum advantages.

Certification of quantum computation is a timely challenge in quantum technologies because it is essential for developing practical quantum applications, particularly in the NISQ era. Our approach provides a solid theoretical justification and vindication of methods by reduced density matrices. Moreover, potential testing schemes on various NISQ devices suffice to perform only a few local Pauli measurements.

Understanding complex condensed matter systems and facilitating quantum computation relies on the fundamental concept of many-body entanglement. The relationship between the unique ground state of local Hamiltonians and UDA may offer a new perspective to studying many-body quantum phases, as we have seen in our discussion of short-range entangled gapped phases. It is also interesting to explore the relationship between quantum phase 
transitions and the geometry of the reduced density matrices~\cite{Verstraete_2006,PhysRevA.93.012309,PhysRevA.106.012434}, as well as to understand the unique determinism of general tensor network states~\cite{Cirac_2021,https://doi.org/10.48550/arxiv.0707.2260}. Moreover, extending our framework to study long-range entangled, topologically ordered states would be desirable.

An explicit construction from reduced density matrices is generally nontrivial unless it is a ground state of some frustration-free local Hamiltonian. Even in the latter case, it becomes challenging with statistical fluctuations from measurement. For states that MPS or PEPS can approximately describe, 
one possible approach is to compute a parent Hamiltonian~\cite{https://doi.org/10.48550/arxiv.quant-ph/0608197,Cramer_2010}. It is even more challenging when we want to reconstruct a quantum circuit that has the same/close output state. The reasons include the non-uniqueness of local Hamiltonians and the unclear method of transforming local Hamiltonians into quantum circuits.
 
Another potentially exciting direction is to extend our results to Hamiltonian with degenerate ground states and quantum circuits with mid-circuit measurements. The development along these directions may provide a new perspective on topological order. 

We end with the 
optimization problems over complexity $D$ states. Besides the quantum marginal problem in Section 5, the local Hamiltonian problems over complexity $D$ states are also of significant interest. It means that given a $k$-local Hamiltonian $H$, to find the smallest $\tr(H\rho)$ over all $\rho$ with complexity at most $D$. One can also study the decision version. The study of this direction has the potential to enrich the understanding of quantum complexity theory and will have applications in quantum chemistry.


\section{Acknowledgement}
After posting this paper on arXiv, Hsin-Yuan Huang and Daniel Stilck França pointed out that for shallow circuits with known circuit structure, the appendix of~\cite{rouze2021learning} provides an efficient learning algorithm. We thank them for informing us of this. Our paper also solves the open question of learning the unique ground state of local Hamiltonian left in the same paper. We also appreciate Martin Plenio for pointing out the paper \cite{Lanyon_2017} on MPS tomography and short quenches (a natural analog of low-depth quantum circuits) generated state tomography.

We also thank Otfried Guehne, Lorenza Viola, Minbo Gao and Zhengfeng Ji for pointing out a mistake in our argument about UDA and the ground state in our previous version. 
We thank Hsin-Yuan Huang and Matthias Caro for additional and helpful discussions of the sample complexity of quantum states with polynomial circuit complexity using classical shadows. 

T.-C.W. acknowledges the  support of the National Science Foundation under Grant No. PHY 2310614 (in particular, for the part on simulating and learning many-body physics) and by the Materials Science and Engineering Divisions, Office of Basic Energy Sciences of the U.S. Department of Energy under Contract No. DESC0012704 (in particular, for the part on one-dimensional systems).
\newcommand{\etalchar}[1]{$^{#1}$}

\end{document}